\definecolor{auburn}{rgb}{0.43, 0.21, 0.1}
\newcommand{\commentLaurent}[1]{\textcolor{red}{(L: #1)}}
\newcommand{\commentAyumi}[1]{\textcolor{blue}{(Ayumi: #1)}}
\newcommand{\commentLuca}[1]{{\bf\textcolor{auburn}{(Luca: #1)}}}
\newtheorem{theorem}{Theorem}[section]
\newtheorem{cor}[theorem]{Corollary}
\newtheorem{lemma}[theorem]{Lemma}
\newtheorem{prop}[theorem]{Proposition}
\newtheorem{claim}[theorem]{Claim}
\newtheorem{define}[theorem]{Definition}
\newtheorem{example}[theorem]{Example}
\newcommand{\cmark}{\ding{51}}%
\newcommand{\xmark}{\ding{55}}%
\newcommand{\bbN}{\mathbb{N}}
\newcommand{\hg}{HG}
\newcommand{\calF}{\mathcal{F}}
\def\x{{X}}
\def\plus{\text{+}}
\title{
Individually Stable Dynamics in Coalition Formation over Graphs}
\author{
Angelo Fanelli\textsuperscript{\rm 1}, 
Laurent Gourv\`{e}s\textsuperscript{\rm 1},  
Ayumi Igarashi\textsuperscript{\rm 2}, 
Luca Moscardelli\textsuperscript{\rm 3}
}
\begin{document}
\maketitle

\begin{abstract}

Coalition formation over graphs is a well studied class of games whose players are vertices and feasible coalitions must be connected subgraphs.  
In this setting, the existence and computation of equilibria, under various notions of stability, has attracted a lot of attention. However, the natural process by which players, starting from any feasible state, strive to reach an equilibrium after a series of unilateral improving deviations, has been less studied. 
We investigate the convergence of dynamics towards individually stable outcomes under the following perspective: what are the most general classes of preferences and graph topologies guaranteeing convergence? 
To this aim, on the one hand, we cover a hierarchy of preferences, ranging from the most general to a subcase of additively separable preferences, including individually rational and monotone cases. On the other hand, given that convergence may fail in graphs admitting a cycle even in our most restrictive preference class, we analyze acyclic graph topologies such as trees, paths, and stars.

\end{abstract}

\section{Introduction}


Coalition formation is an important and widely investigated
issue in 
artificial intelligence.  
In 
many economic, social, and political settings, individuals carry out activities in groups rather than by themselves.  
\emph{Hedonic games}, introduced by \citet{Dreze1980} and later developed in \cite{Banerjee2001,Bogomolnaia2002,romero2001stability}, are among the most important game-theoretic approaches to the study of coalition formation problems. 
An outcome for these games is 
a partition of the players into coalitions, over which the players have utilities. 
A player's utility 
depends on the coalition she belongs to, and is not affected by how 
the participants of other coalitions are partitioned.

The standard model of hedonic games does not impose any restrictions on which coalitions may form. However, in reality, we often encounter {\em network} constraints on coalition formation, i.e.,    
entities can communicate and cooperate only if there are connected. 
Such restrictions 
can be naturally described by means of undirected graphs, giving life to \emph{graph hedonic games}, in which players 
are identified with vertices, 
communication links with edges, and feasible coalitions with connected subgraphs \cite{Myerson77,Demange2004,igarashi2016hedonicgraph}.

In this paper, we study graph hedonic games under the perspective of {\em individual stability} (IS for short), a natural notion of stability introduced by \citet{Dreze1980}. 
In particular, a player $i$ performs an 
{\em IS deviation} to coalition $T$ whenever $i$ 
prefers coalition $T \cup \{i\}$ to her current coalition, and also all players in $T$ 
do not prefer $T$ to $T \cup \{i\}$. 
Roughly speaking, an IS deviation is a Nash deviation 
with the additional constraint that all players in coalition $T$ have to ``accept'' player $i$. A partition 
is individually stable if no player has an IS 
deviation available. 
Real life examples of IS deviations exist: 
a nation can enter the NATO, or the EU, 
only if its members unanimously agree.



Note that stability is mostly 
concerned with the final state of the coalition formation process and one frequently ignores how these desirable partitions can actually be reached. 
Essentially, speaking about stability often implicitly assumes that there is a central authority knowing the preferences of all players and computing a stable partition. 
However, the existence of a stable state does not imply that players, starting from an initial configuration, 
eventually reach stability after a finite sequence of 
improving deviations. In the well known stable marriage problem, 
a stable matching always exists under mild assumptions, 
and the centralized algorithm of \citet{GaleShapley} outputs such a state. However, the natural process of sequentially eliminating the existing blocking pairs may loop \cite{knuth1997stable}. Examples of this kind motivate the study of 
decentralized coalition formation processes operated by autonomous entities. 

\begin{table*}[ht!]
	\footnotesize
	\centering
	\begin{tabular}{l|ccccc}
		\toprule
		 & general & individually rational (IR)& monotone & LAS\\
		\midrule
		cycles & \xmark & \xmark& \xmark & \xmark~ (Ex.~\ref{counterexample:cycles})\\
		trees & \xmark  & \xmark  & \xmark~ (Ex.~\ref{counterexample:mon:tree}) & \cmark ~(Th.~\ref{thm:trees}) \\
		paths & \xmark~(Ex.~\ref{ex:2coalitions}) & \xmark~ (Ex.~\ref{counterexample:paths} for $\ge 4$ coalitions) & \cmark ~(Th.~\ref{thm:paths:monotone}) & \cmark ~(Th.~\ref{thm:paths:monotone})\\
		 &  & \cmark~(Th.~\ref{thm:three:paths} for $\leq 3$ coalitions) & & \\
        stars & \xmark ~(Ex.~\ref{counterexample:star}) & \cmark ~(Th.~\ref{IS:stars}) & \cmark~ (Th.~\ref{IS:stars}) & \cmark ~(Th.~\ref{IS:stars})\\
		\bottomrule
	\end{tabular}
\caption{
Overview of the convergence results. The ``\xmark" indicates that the dynamics are not guaranteed to converge, while ``\cmark" indicates that the dynamics under individual stability notion converge. The existence of an individually stable partition is always guaranteed except when the graph contains a cycle and the preferences are general \citep{igarashi2016hedonicgraph}. Ex.~\ref{ex:2coalitions} is in the appendix.  
}
\label{table:HGT}
\end{table*}

This article focuses on IS dynamics. We  
analyze the process in which the players interact and, possibly,  
reach an individually stable coalition structure. 
More precisely, we aim at determining the most general possible classes of preferences and graph topologies guaranteeing convergence of IS dynamics in  graph hedonic games (for any instance and 
initial state). 
To this aim, we consider the following natural classes of preferences, that we list from the most general to the most particular: 
\emph{general} preferences 
(no restrictions except the ones that are usual in hedonic games), 
\emph{individually rational} (IR for short) preferences (each player likes any coalition she belongs to at least as much as if she were alone), 
\emph{monotone} preferences (each player likes any superset of any coalition $S$ she belongs to at least as much as $S$), and \emph{local additively separable} (LAS for short) preferences. LAS preferences specialize the well studied case of \emph{additively separable} (AS for short) preferences where each player $i$ assigns a value $v_i(j)$ to any other player $j$, and a player prefers the coalition maximizing the sum of values towards its members.  
In LAS preferences, which apply to graph hedonic games, $v_i(j)$ is non negative, and it can be strictly positive only if $i$ and $j$ are neighbours in the graph.


\medskip

\noindent {\bf Contributions.}  
Our results 
are summarized in Table \ref{table:HGT}. 
On the negative side, 
general preferences do not 
guarantee convergence even for the very basic topologies of paths and stars. Analogously, when considering graphs containing cycles, even the most particular class of preferences under study, that is LAS preferences, does not ensure convergence. 
On the positive side, while IR 
preferences 
ensure 
convergence for stars, monotone preferences are needed in order to secure convergence on paths\footnote{When starting from  partitions 
with at most $3$ coalitions, individually rational preferences also  guarantee convergence for paths.} and LAS preferences are needed for guaranteeing convergence on trees. For LAS preferences or when the graph is a star,
we also provide worst case bounds on 
the number of steps that the dynamics need before reaching a stable outcome.

\medskip

\noindent {\bf Related work.}  The book chapter by \citet{AS16} gives a comprehensive overview of 
hedonic games ({\hg}s for short); see also \cite{JH06} for an earlier survey on coalition formation games. Various notions of stability have been considered for {\hg}s, such as core,\footnote{When considering \emph{core stability}, an outcome is stable if and only if there exists no coalition that could make all its members 
better off. 
} 
Nash, or  individual stability. These concepts refer to states that exclude a certain kind of deviations, so they are 
naturally related to the dynamics where, starting from some initial partition, the corresponding deviation is repeatedly performed if it is possible. When it converges,  such a decentralized process constitutes a  plausible explanation of the formation of coalitions.     

One of the best known settings is that of {\hg}s with  
{\em symmetric} AS preferences ($v_i(j)=v_j(i)$ for every pair $i,j$ of players) 
for which the IS dynamics always converge \cite{Bogomolnaia2002}. In fact, in such games, the sum of all players' utilities always increases after the deviation of a player to her preferred coalition. \citet{Gairing2010}, however, showed that finding an IS 
state is PLS-complete, meaning that it is unlikely that the IS dynamics converge after a polynomial number of steps.

In a significant part of the literature on 
cooperative games, 
some graph constraints are imposed on the coalitions that can form. 
The classical result of \citet{Demange2004} in non-transferable utility games translates to the fact that a core stable outcome   
always exists in a hedonic game on a tree. \citet{igarashi2016hedonicgraph} strengthened the result and showed that there is an outcome that satisfies both core and individual stability; 
further, they provided a polynomial time algorithm to compute an individually stable outcome whenever the graph is a tree.  

Concerning the existence of IS states, note that when the graph is connected, having  IR 
preferences implies that the 
state where all players are in the same coalition 
is IS. For general preferences, an IS 
partition is guaranteed to exist when the graph is a forest while it may not exist when the graph contains a cycle 
\cite{igarashi2016hedonicgraph}.

There is a growing literature on the dynamics in {\hg}s. 
\citet{Brandt_Bullinger_Tappe_2022} studied the convergence of dynamics associated with relaxed notions of IS deviations where, instead of requiring unanimous consents, the deviation of a  player is doable if it is accepted by 
a majority of the members of the welcoming coalition.   In another closely related work, \citet{Brandt2023} 
studied the convergence of IS dynamics in 
anonymous {\hg}s, hedonic diversity games, fractional {\hg}s, and dichotomous {\hg}s.  
Their results do not readily compare to Table \ref{table:HGT} since their games 
don't have 
graph connectivity constraints 
and their 
players' preferences are very different. For instance, individual rationality is not necessarily satisfied in hedonic diversity games, fractional {\hg}s, and dichotomous {\hg}s. \citet{Boehmer_Bullinger_Kerkmann_2023} studied the dynamics in {\hg}s where the utilities of players change over time, depending on the history of the coalition formation process. \citet{Hoefer2018} considered dynamics towards  
core stable states in a general hedonic coalition formation game with various constraints of visibility and externality, 
where the players have {\em correlated preferences}, i.e., all members of a coalition have the same utility. 

\medskip 
\noindent {\bf Organization.} The remainder of this article is organised as follows. 
Our model, concepts and dynamics are formally defined in Section \ref{sec:model}. The subsequent sections are dedicated to specific graph topologies: paths (Section \ref{sec:paths}), stars (Section \ref{sec:stars}), and trees (Section \ref{sec:trees}). We finally conclude with directions for future work. Due to space constraints, missing elements such as proofs are placed in the supplementary material.

\section{Model} \label{sec:model}


A {\it graph hedonic  game} 
is defined on a finite graph  $(N,L)$ where $N$ is a set of $n \geq 2$ players, and $L$ is a set of undirected edges between players. 
Players are able to cooperate if and only if they are connected in the graph $(N,L)$. Let $\calF$ be the set of all nonempty subsets $S$ of $N$ such that the subgraph induced by $S$ is connected.
Each player $i \in N$ has a preference ordering $\succeq_{i}$ over the subsets in $\calF(i):=\{\, S \in \calF \mid  i\in S \,\}$. The subsets of $N$ are referred to as {\em coalitions}. A coalition is said to be {\em feasible} if it belongs to $\calF$.
A partition $\pi$ of $N$ is said to be {\em feasible} if $\pi \subseteq \calF$. 
An {\em outcome} or {\em state} (we interchangeably use these terms) 
of a graph hedonic game is a feasible partition. 
For player $i \in N$ and a partition $\pi$ of $N$, we denote by $\pi(i)$ the coalition to which $i$ belongs. We assume without loss of generality that $(N,L)$ is connected (otherwise each connected component can be treated separately).

\subsection{Preferences}
Fix any player $i \in N$, and $S,T \in \calF(i)$. $S \succeq_i T$ means that $S$ is at least as good as $T$ from player $i$'s viewpoint. 
We write $S \succ_i T$ to express that $i$ strictly prefers $S$ over $T$, whereas $S \sim_i T$ means that $i$ is indifferent between $S$ and $T$. 

As is standard in hedonic games, we always assume that $\succeq_{i}$ is complete, transitive, and reflexive \cite{AS16}.
The preference relation $\succeq_i$ is said to be \emph{general} if no further assumption is made on it. 
The preference relation $\succeq_i$ is said to be \emph{individually rational} (IR) if $S \succeq_i \{i\}$ holds for every $S \in \calF(i)$. 
The preference relation $\succeq_i$ is said to be {\em monotone} if $S \succeq_i T$ always holds when $T \subseteq S$. 

In the well studied case of {\em additively separable preferences}, every player $i$ has a value $v_i(j)$ for being in the same coalition as player $j$. Player $i$ has \emph{utility} $\sum_{j \in S \setminus \{i\}} v_i(j)$ when she is in coalition $S$. 
The utility is 0 when a player is alone. 
Then, $S \succeq_i T$ holds when $\sum_{j \in S \setminus \{i\}} v_i(j) \ge \sum_{j \in T \setminus \{i\}} v_i(j)$. In this article we also consider \emph{local additively separable} (LAS in short) preferences, a special case of additively separable preferences 
where the players' values are  non symmetric ($v_i(j)$  can differ from $v_j(i)$), 
non negative, and $v_i(j)$ can be positive only if $i$ and $j$ are neighbors in $(N,L)$, i.e., $v_i(j)>0 \Rightarrow (i,j) \in L$.

The hierarchy of the above preference relations is \begin{center}General $\supseteq$ Individually Rational $\supseteq$ Monotone $\supseteq$ LAS\end{center}   
since LAS preferences are monotone, and monotone preferences are IR.  
An instance  of the graph hedonic game is said to be ${\cal P}$ with ${\cal P} \in\{$general, individually rational, monotone, LAS$\}$ when the preferences of {\em all} the players are ${\cal P}$. 

\subsection{The dynamics under individual stability}

Consider a player $i \in N$, a feasible coalition $S$ containing $i$, and $T \subseteq N \setminus S$. 
A player $i$ {\it wants to deviate} from $S$ to $T$ if $T \cup \{i\} \in \calF$ and $T \cup \{i\} \succ_{i} S$. 
A player $j \in T$ {\it accepts} a deviation of $i$ to $T$ if $T\cup \{i\} \succeq_{j} T$. Thus, an {\em IS deviation} by $i$ from $S$ to $T$ 
is possible if $i$ wants it, 
and all players in $T$ accept it. As a result of player $i$'s deviation, $S$ and $T$ become $S \setminus \{i\}$ and $T \cup \{i\}$, respectively.


\begin{define}
A feasible partition $\pi$ of $N$ is said to be {\it individually stable} (IS) if no player $i \in N$ has an IS 
deviation to another coalition in $\pi \cup \{\emptyset\}$. 
\end{define}

This article focuses on the dynamics associated with individual stability (a.k.a. IS dynamics) as described in Algorithm \ref{alg:is:general}. The deviations are sequential 
and players keep on deviating if the current partition is not individually stable. If, in the dynamics, several players are eligible for an IS deviation, then we suppose that one of them is chosen arbitrarily.

\begin{algorithm}                      
\caption{IS dynamics}         
\label{alg:is:general}                          
\begin{algorithmic}[1]                  
\REQUIRE a graph hedonic game $(N,L,(\succeq_i)_{i\in N})$ and an initial feasible partition $\pi_0$
\ENSURE A feasible partition $\pi$ of $N$ 
\STATE $\pi \leftarrow \pi_0$
\WHILE{
there exists an IS deviation of  
$i\in N$ from $\pi(i)$ to $T\in \pi \cup \{\emptyset\}$
}
\STATE  $\pi \leftarrow (\pi \setminus \{\pi(i),T\})\cup \{T\cup \{i\}\} \cup \{\, S \mid S~\mbox{is a maximal connected subset of}~\pi(i)\setminus \{i\}\,\}$.\label{line:ISdeviation}
\ENDWHILE
\RETURN $\pi$
\end{algorithmic}
\end{algorithm}


When a player $i$ leaves her coalition $\pi(i) \in \calF(i)$, the graph induced by $\pi(i) \setminus \{i\}$ is not necessarily connected. In the IS dynamics, 
it is assumed that the members of $\pi(i) \setminus \{i\}$ reconfigure themselves in a minimum number of feasible coalitions by forming inclusionwise maximal connected subsets of $\pi(i) \setminus \{i\}$ (cf. Line \ref{line:ISdeviation}). The motivation behind this assumption is to consider the minimal changes in $\pi(i)$ caused by  the departure of $i$.


The IS dynamics consist of successive 
{\em better moves}, 
i.e., a player does not necessarily join her most preferred coalition, within the set of coalitions that would accept her. 


A {\em sequence} in the IS dynamics is an ordered list of states $\langle \pi_0, \ldots,\pi_k \rangle$ where each $\pi_t$ is obtained from $\pi_{t-1}$ by a single IS deviation. The sequence $\langle \pi_0, \ldots,\pi_k \rangle$ is {\em cyclic} (equivalently, $\langle \pi_0, \ldots,\pi_k \rangle$ is a {\em cycle}) if $\pi_0=\pi_k$. We say that the IS dynamics {\em converge} if its input does not admit any cyclic sequence. Otherwise, we say that the IS dynamics {\em cycle}. 

The IS dynamics 
can cycle in a graph 
which is a cycle,  
even if the preferences are LAS (cf. Example \ref{counterexample:cycles}).

\begin{example}\label{counterexample:cycles}
In this instance $N=\{a,b,c\}$, $L=\{(a,b),(b,c),(a,c)\}$, $v_a(b)=v_b(c)=v_c(a)=1$, and any other value is 0, so the preferences are: 
\begin{itemize}
\item
a: $\{a,b,c\}  \sim \{a,b\} \succ \{a,c\} \sim \{a\} $ 
\item
b: $\{a,b,c\} \sim \{b,c\} \succ \{a,b\} 
\sim  \{b\} 
$ 
\item
c: $\{a,b,c\} \sim \{a,c\}  \succ \{b,c\}
\sim \{c\}
$ 
\end{itemize}

The initial partition $\pi_0$ is $\{\{a,c\},\{b\}\}$. Player $a$ moves from $\{a,c\}$ to $\{b\}$ giving $\{\{a,b\},\{c\}\}$. Player $b$ moves from $\{a,b\}$ to $\{c\}$ giving $\{\{a\},\{b,c\}\}$. Player $c$ moves from $\{b,c\}$ to $\{a\}$ giving $\{\{a,c\},\{b\}\}$,  which is $\pi_0$.
\end{example}

Example \ref{counterexample:cycles} relies on a cycle of 3 vertices but a similar non-convergence result  can be shown when $(N,L)$ is a cycle with $n > 3$ vertices (cf. Appendix).
Example \ref{counterexample:cycles} tells us that   
even if we consider our most restricted type of preferences (i.e., LAS preferences), convergence may fail if the underlying graph contains a cycle. 
For this reason, in the remainder of this article, we consider acyclic graphs, i.e., paths, stars, and trees.  

\section{Paths} \label{sec:paths}
This section is devoted to the case in which $(N,L)$ is a path. To this respect, we provide a complete picture of the convergence of IS dynamics.
In particular, we first prove one of our main technical results: for monotone preferences, the IS dynamics always converge.
Then, we exhibit an example of non convergence in which players have individually rational preferences and we show that, always under individually rational preferences, the IS dynamics always converge when starting from a partition with at most three coalitions. 

Some results of this section will rely on the following lemma.
\begin{lemma} \label{lem:size} 
The number of coalitions does not increase during the IS dynamics when $(N,L)$
is a path and players have individually rational preferences.
\end{lemma}

\subsection{Monotone preferences}\label{sec:path:monotone}
Let us note that individual stability and Nash stability are equivalent when the instance is monotone, since no player refuses that another enters her coalition.

We shall prove that, under monotone preferences, the IS dynamics always converge on paths.
Roughly speaking, the proof assumes that a cycle exists in the dynamics and exhibits a contradiction as follows. Consider any player $i$, being the rightmost of coalition $C$, joining in the dynamics the coalition $C'$ on her right. We prove that, in order for player $i$ to go back to her original coalition $C$, the rightmost player in $C'$, say player $i'$, has to join coalition $C''$, being on the right of $i'$. By iteratively repeating this argument, we obtain that the rightmost player of the path should also join the coalition on her right: a contradiction, given that no coalition exists on her right.
It is therefore possible to prove the following theorem.

\def\s{{s}}
\def\t{{t}}
\begin{theorem}\label{thm:paths:monotone} 
Under monotone preferences, the IS dynamics always converge on paths.
\end{theorem}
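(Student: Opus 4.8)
The plan is to follow the proof sketch that the authors have already outlined just above the statement, but I should fill in the structure carefully. Fix a path $(N,L)$ with vertices $1,2,\dots,n$ from left to right, and suppose for contradiction that the IS dynamics admit a cyclic sequence $\langle \pi_0,\pi_1,\dots,\pi_k\rangle$ with $\pi_0=\pi_k$. First I would record the basic structural fact: on a path, every feasible coalition is a contiguous sub-path, i.e.\ an interval $\{a,a{+}1,\dots,b\}$. Under monotone preferences every player has individually rational preferences, so Lemma \ref{lem:size} applies and the number of coalitions is non-increasing along the sequence; since the sequence is cyclic, the number of coalitions must be constant throughout, which means no deviation in the cycle ever merges two coalitions or creates a singleton — every IS deviation in the cycle moves the rightmost (resp.\ leftmost) element of one interval coalition into the adjacent interval coalition on its right (resp.\ left), and the vacated coalition remains a (shorter) nonempty interval. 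In particular the ordered sequence of coalitions along the path is a fixed list $C_1 < C_2 < \dots < C_m$ (in left-to-right order) with $m$ fixed, and a deviation only shifts the boundary between two consecutive $C_j$'s by one vertex.

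The heart of the argument is a "propagation to the right" claim. Consider the rightmost coalition $C_m$, whose right endpoint is the global rightmost vertex $n$ (since the coalitions partition the whole path). I want to show that across one full period of the cycle, the boundary between $C_{m-1}$ and $C_m$ returns to its starting position, and more importantly that $C_m$ itself is never modified — because the only way a vertex could leave $C_m$ is to move further right, which is impossible, and (by monotonicity, hence Nash-equivalence of IS) a vertex from $C_{m-1}$ wanting to move into $C_m$ would simply do so and never want to leave. More carefully, I would argue by a rightmost-violation / extremal argument: look at the rightmost coalition index $j^\*$ whose contents actually change at some point during the cycle. A vertex entering $C_{j^\*}$ from the left (from $C_{j^\*-1}$) does so because it strictly prefers the larger coalition; by monotonicity it will never subsequently want to leave that coalition to the left, and it cannot leave to the right without altering $C_{j^\*+1}$, contradicting maximality of $j^\*$ unless $j^\* = m$. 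So the only coalition that can change is $C_m$; but a vertex that moves from $C_{m-1}$ into $C_m$ strictly prefers $C_m\cup\{i\}$, by monotonicity never wants to move back left, and cannot move right past $n$ — so this vertex is permanently lost to $C_{m-1}$, $C_{m-1}$ strictly shrinks and never regrows, contradicting that the sequence is cyclic (it must return to $\pi_0$). Hence no coalition changes at all, so there is no deviation in the cycle — contradiction.

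I expect the main obstacle to be making the "never wants to move back" step fully rigorous, because monotonicity only says $S\succeq_i T$ when $T\subseteq S$; it does not immediately compare a coalition to a non-nested one. The clean way around this is exactly the Nash-equivalence remark the authors flag: under monotone preferences an IS deviation is the same as a Nash deviation, and crucially the \emph{only} destinations available to a player on a path are the (at most two) adjacent coalitions, each obtained by \emph{adding} that player to a superset of what it would leave behind or by comparison with intervals that are shifts of one another. I would therefore need a short lemma: once vertex $i$ has joined coalition $C$ (an interval with $i$ at one end), in any later state where $i$ is again an endpoint of its coalition, $i$'s current coalition is a \emph{superset of or equal to} some interval it has previously occupied, so by monotonicity and transitivity $i$ never strictly prefers the smaller adjacent coalition it came from. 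Formalizing this invariant — that along the cycle each boundary only moves monotonically in one direction before it would have to reverse, which monotonicity forbids — is the delicate bookkeeping, and I would organize it as: pick the boundary between $C_{m-1}$ and $C_m$, show it is monotone non-decreasing (moves only rightward) over the period, conclude it is constant, hence $C_{m-1}$ and $C_m$ are untouched, then induct downward on $j$ to show every boundary is constant, giving the contradiction. Once the one-direction monotonicity of each boundary is established, constancy over a cycle and the final contradiction are immediate.
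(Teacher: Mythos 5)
Your overall strategy (constant number of coalitions via Lemma \ref{lem:size}, then a contradiction obtained by forcing a deviation past the end of the path) matches the paper's, but the engine you propose for the contradiction does not work. You anchor the argument at the \emph{right} end: you take the rightmost coalition index $j^{*}$ whose contents change (equivalently, you try to show the boundary between $C_{m-1}$ and $C_m$ moves monotonically), and the whole construction rests on the claim that a player $i$ who joins the coalition on her right ``will never subsequently want to leave that coalition to the left.'' Monotonicity does not give you this. When $i$ joins, you only learn $[i,r]_P \succ_i [\ell,i]_P$ where $[\ell,i]_P$ was her old coalition. If the coalition to her left later grows, say to $[\ell',i-1]_P$ with $\ell'<\ell$, then moving back would put her in $[\ell',i]_P \supseteq [\ell,i]_P$; monotonicity only yields $[\ell',i]_P\succeq_i[\ell,i]_P$, and $[\ell',i]_P$ and $[i,r]_P$ are not nested, so nothing stops $[\ell',i]_P\succ_i[i,r]_P$. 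Concretely, on the path $1,\dots,6$ with $\{4,5,6\}\succ_4\{3,4\}$ and $\{2,3,4\}\succ_4\{4,5,6\}$ (perfectly consistent with monotonicity), player $4$ joins the right coalition and later returns once player $2$ has enlarged the middle one. So the boundary between $C_{m-1}$ and $C_m$ can oscillate, your extremal index $j^{*}$ need not equal $m$, and the downward induction you sketch has no base.

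The paper avoids exactly this trap by anchoring at the \emph{leftmost} deviating player $a_\beta$: everything strictly to her left is frozen for the entire cycle, so the coalition she would return to is a subset of the one she left, and monotonicity then forces the coalition she joined to have \emph{shrunk} before she can come back --- i.e., its rightmost player $a_{\beta+1}$ must have departed further right. This shrinkage is propagated rightward along the chain $a_\beta,a_{\beta+1},\dots,a_{k-1}$, with an inductive invariant (claim (i.c)) that keeps track of where $a_{j-1}$ sits precisely so that the left-hand coalition of $a_j$ is controlled at the moment $a_j$ wants to return; the contradiction arrives at the last coalition, which contains player $n$ and cannot lose its rightmost member. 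Your proposal is missing this left-anchoring idea and the accompanying control of the left-hand coalition, and without it the ``never moves back'' step --- which you yourself flag as the delicate point --- is simply false.
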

\begin{proof}[Proof sketch] 
We first need some additional notation.
For $s,t \in \bbN$ with $s \leq t$ we write $[s,t]=\{s,s+1,\ldots,t\}$; we write $[s]=\{1,2,\ldots,s\}$. 
The input graph is a path $P$ where the players are named  
$1, \ldots, n$ 
from left to right. A feasible coalition 
in $P$ is a sequence of consecutive players denoted by $[\ell,r]_P$, where $\ell$ and $r$ are the leftmost and 
and 
rightmost players, respectively. 


Given a cycle  $D=\langle \pi_0,\ldots,\pi_{\alpha} \rangle$ in the IS dynamics, let $|D|=\alpha+1$ denote its length, i.e., the number of deviations. 
Therefore, $D$ is composed of IS deviations $m_0,\ldots,m_{|D|-1}$, where, for every $i=0,\ldots,|D|-1$, deviation $m_i$ at time $t_i$ is from state $\pi_{i}$ to state $\pi_{(i+1) \mod |D|}$.

Given a cycle $D$ in the IS dynamics, and any integers $a, b \in \{0,\ldots,|D|-1\}$, we denote by $[a, b]_D$ the set of integers defined as follows: first of all, let $a'=a \mod |D|$ and $b'=b \mod |D|$; if $b' \geq a'$, then $[a, b]_D  = \{a',\ldots,b'\}$, otherwise, i.e., if $b'<a'$, $[a, b]_D  = \{a',\ldots,|D|-1\} \cup \{0,\ldots,b'\}$. Finally, we say that integer $a$ is \emph{closer (resp., farther) with respect to time $t$} than integer $b$ if $(a-t) \mod |D| < (b-t) \mod |D|$ (resp., if $(a-t) \mod |D| > (b-t) \mod |D|$).\footnote{We are assuming that $r = x \mod |D|$ is defined according to the floored division, i.e., when the quotient is defined as $q=\left\lfloor \frac {x}{|D|} \right\rfloor$, and thus the remainder $r= x - q |D|$ is always non-negative even if $x$ is negative.}

Assume by contradiction that there exists a cycle $D=\langle \pi_0,\ldots,\pi_{|D|-1} \rangle$ in the IS dynamics. 
First of all, notice that, since the number of coalitions can never increase after a deviation of $D$ (by Lemma \ref{lem:size}), 
the number of coalitions of all states in $D$ has to be the same: let $k$ be this number. 
We therefore obtain that, for every $i=0,\ldots,|D|-1$, state $\pi_i$ is composed of $k$ coalitions $C_1^i,\ldots,C_k^i$. 
In the following, we always assume that superscripts 
of $C$ are modulo $|D|$.
Moreover, for any $j=1,\ldots,k$ and $i=0,\ldots,|D|-1$, let $\ell_j^i$ and $r_j^i$ denote the leftmost and rightmost player in 
$C_j^i$, 
respectively, i.e., 
$C_j^i = [\ell_j^i,r_j^i]_P$.

In this sketch, for ease of exposition, we assume that the leftmost player in $P$ making a deviation in $D$ belongs to the first coalition.

For any $j=2,\ldots,k-1$, let $a_{j}$ be the rightmost player in coalition $C_{j}^{\s_{j-1}+1}$ 
(i.e., $a_{j}$ is the rightmost player of the coalition in which player $a_{j-1}$ arrives at time $\s_{j-1}+1$) and let $\s_{j}$ be the closest time with respect to time $\s_{j-1}$ in which $a_j$ moves from coalition 
$C_j^{\s_j}$ to coalition $C_{j+1}^{\s_{j}+1}$. 
Moreover, let $\t_j$ be the farthest time with respect to $\s_j$ in which player $a_j$ comes back to a coalition of index $j$, i.e., she moves from coalition 
$C_{j+1}^{\t_{j}}$ to coalition $C_{j}^{\t_{j}+1}$.

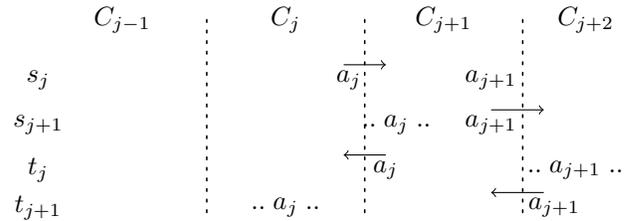
\begin{figure}[ht!]
\begin{center}
\begin{tikzpicture}[xscale=1.4,yscale=1]
\definecolor[named]{drawColor}{gray}{0}

\path[
      draw=drawColor,
      line width= 0.6pt,
      dash pattern=on 1pt off 3pt,
      line cap=round,
    ] (1.6,2.6)--(1.6,0);
\path[
      draw=drawColor,
      line width= 0.6pt,
      dash pattern=on 1pt off 3pt,
      line cap=round,
    ]  (3.1,2.6)--(3.1,0);
\path[
      draw=drawColor,
      line width= 0.6pt,
      dash pattern=on 1pt off 3pt,
      line cap=round,
    ]  (4.6,2.6)--(4.6,0);

\node () at (0.8,2.6) {$C_{j-1}$};
\node () at (2.35,2.6) {$C_{j}$};
\node () at (3.85,2.6) {$C_{j+1}$};
\node () at (5.2,2.6) {$C_{j+2}$};

\node () at (0,1.8) {$\s_{j}$};
\node () at (0,1.2) {$\s_{j+1}$};
\node () at (0,0.6) {$\t_{j}$};
\node () at (0,0.1) {$\t_{j+1}$};

\node () at (2.95,1.8) {$a_{j}$};
\node () at (4.3,1.8) {$a_{j+1}$};

\node () at (3.4,1.2) {$..\;a_{j}\;..$};
\node () at (4.3,1.2) {$a_{j+1}$};

\node () at (3.3,0.6) {$a_{j}$};
\node () at (5.1,0.6) {$..\;a_{j+1}\;..$};

\node () at (2.35,0.1) {$..\;a_{j}\;..$};
\node () at (4.9,0.1) {$a_{j+1}$};

\draw[->] (2.9,2.0)--(3.3,2.0);
\draw[->] (4.3,1.4)--(4.8,1.4);
\draw[<-] (2.9,0.8)--(3.3,0.8);
\draw[<-] (4.3,0.3)--(4.8,0.3);
\end{tikzpicture}

\end{center}
\caption{Claim (i) of the proof by induction.} \label{fig:claim.i}
\end{figure}

\begin{figure}[ht!]
\begin{center}
\begin{tikzpicture}[scale=1.2]
\definecolor[named]{drawColor}{gray}{0}

\path[
      draw=drawColor,
      line width= 0.6pt,
      dash pattern=on 1pt off 3pt,
      line cap=round,
    ]  (1.6,1.3)--(1.6,0);
\path[
      draw=drawColor,
      line width= 0.6pt,
      dash pattern=on 1pt off 3pt,
      line cap=round,
    ]  (2.95,1.3)--(2.95,0);
\path[
      draw=drawColor,
      line width= 0.6pt,
      dash pattern=on 1pt off 3pt,
      line cap=round,
    ]  (4.6,1.3)--(4.6,0);

\node () at (2.35,1.3) {$C_{k-2}$};
\node () at (3.85,1.3) {$C_{k-1}$};
\node () at (5.2,1.3) {$C_{k}$};

\node () at (0.5,0.5) {$\s_{j}$};
\node () at (4.45,0.5) {$a_{j}$};
\node () at (5.5,0.5) {$n$};
\draw[->] (4.35,0.7)--(4.8,0.7);

\end{tikzpicture}
\end{center}
\caption{Claim (ii) of the proof by induction.} \label{fig:claim.ii}
\end{figure}

It is possible to prove the following claims by induction on $j=1,\ldots,k-1$:
\begin{itemize}
\item (i) if $j \leq k-2$, then it holds that (see Figure \ref{fig:claim.i}):
\begin{itemize}
\item (i.a) player $a_{j+1}$ is in coalition 
$C_{x}^{\t_j}$ 
with $x \geq j+2$; 
\item (i.b) $\s_{j+1} \in [\s_{j},\t_{j}]_D$ and  $\t_{j+1} \in [\t_{j},\s_{j}]_D$; 
\item (i.c) player $a_{j}$ is both in coalition 
$C_{j+1}^{\s_{j+1}}$ and in coalition $C_{x}^{\t_{j+1}}$ 
with $x \leq j$; 
\end{itemize}
\item (ii) if $j=k-1$, then player $a_{k-1}$ moves to a coalition of index $k$ (by claim (i.a) holding for $j=k-2$), but she cannot go back to a coalition of index $k-1$ (see Figure \ref{fig:claim.ii}), essentially because player $n$ cannot abandon the last coalition: a contradiction to the fact that $D$ is a cycle in the IS dynamics.~\qedhere
\end{itemize}
\end{proof}

\subsection{Individually rational preferences}\label{sec:IR:path}
In this section, we consider the case when the graph is a path and agents have individually rational preferences. 
We first present an example for which the IS dynamics may not converge when 
the  
initial partition contains four coalitions. 

\begin{example}\label{counterexample:paths}
There are $8$ players, $a,b,c,d,e,f,g,h$ aligned on a path in this order. The preferences are given as follows. 
\begin{itemize}
	\item a: $\{a,b\} \succ \{a\}$
	\item b: $\{b,c,d,e\} \succ \{a,b \}\succ \{b,c\} \succ \{b\}$
	\item c: $\{b,c,d,e\} \succ \{c,d,e\} \succ \{c,d\} \succ \{c\}$ 
	\item d: $\{b,c,d,e\} \succ \{c,d,e\} \succ \{c,d\} \succ \{d,e,f,g\} \succ \{d,e,f\} \succ \{b,c,d\} \succ \{d\}$
	\item e: $\{d,e,f,g\} \succ \{d,e,f\} \succ \{e,f\} \succ \{b,c,d,e\} \succ \{c,d,e\} \succ \{e,f,g\} \succ \{e\}$
	\item f: $\{d,e,f,g\} \succ \{ d,e,f\} \succ \{e,f\} \succ \{f\}$
	\item g: $\{d,e,f,g\} \succ \{g, h \} \succ \{f,g\} \succ \{g\}$
	\item h: $\{g,h\} \succ \{h\}$
\end{itemize}
Note that all coalitions which do not appear are ranked at the same positions as the singleton coalitions. Then, we obtain a cycle as depicted below (deviating players are mentioned above the arrows).
\begin{center}	
\begin{tikzpicture}[scale=0.55, transform shape,every node/.style={minimum size=6mm, inner sep=1pt}]

\node(1) at (-4,3) {\Large $\{\{a\},\{b,c,d,e\},\{f\},\{g,h\}\}$};
\node(2) at (4,3) {\Large $\{\{a\},\{b,c,d\},\{e,f\},\{g,h\}\}$};
\node(3) at (5,1) {\Large $\{\{a\},\{b,c\},\{d,e,f\},\{g,h\}\}$};
\node(4) at (5,-1) {\Large $\{\{a\},\{b,c\},\{d,e,f,g\},\{h\}\}$};
\node(5) at (4,-3) {\Large $\{\{a,b\},\{c\},\{d,e,f,g\},\{h\}\}$};
\node(6) at (-4,-3) {\Large $\{\{a,b\},\{c,d\},\{e,f,g\},\{h\}\}$};
\node(7) at (-5,-1) {\Large $\{\{a,b\},\{c,d,e\},\{f,g\},\{h\}\}$};
\node(8) at (-5,1) {\Large $\{\{a\},\{b,c,d,e\},\{f,g\},\{h\}\}$};

\draw[->, >=latex,thick] (1)--(2) node [midway, above, sloped] (TextNode) {$e$};
\draw[->, >=latex,thick] (2)--(3) node [midway, above, sloped] (TextNode) {$d$};
\draw[->, >=latex,thick] (3)--(4) node [midway, above, sloped] (TextNode) {$g$};
\draw[->, >=latex,thick] (4)--(5) node [midway, above, sloped] (TextNode) {$b$};
\draw[->, >=latex,thick] (5)--(6) node [midway, above, sloped] (TextNode) {$d$};
\draw[->, >=latex,thick] (6)--(7) node [midway, above, sloped] (TextNode) {$e$};
\draw[->, >=latex,thick] (7)--(8) node [midway, above, sloped] (TextNode) {$b$};
\draw[->, >=latex,thick] (8)--(1) node [midway, above, sloped] (TextNode) {$g$};

\end{tikzpicture}
\end{center}
The example can be easily 
extended to 
a larger number of 
coalitions.
\end{example}

Nevertheless, with an initial partition consisting of at most three coalitions and players having individually rational preferences, we are able to prove the following theorem, establishing the convergence of the IS dynamics.  

\begin{theorem}\label{thm:three:paths}
Suppose that the graph $(N,L)$ is a path $P$ and $|\pi_0| \leq 3$. Under individually rational preferences, the IS dynamics always converge.
\end{theorem}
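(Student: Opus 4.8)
The plan is to exploit the structure forced by Lemma~\ref{lem:size}: on a path with individually rational preferences, the only admissible IS deviations move the rightmost player of some coalition into the coalition immediately to its right, or the leftmost player of some coalition into the coalition immediately to its left, and the number of coalitions never increases. Hence in any cyclic sequence $D$ the number of coalitions is constant along $D$; with $|\pi_0|\le 3$ this constant $k$ is at most $3$. The cases $k=1$ and $k=2$ are easy: if $k=1$ there are no deviations at all (IR forbids splitting off a singleton and there is no neighbouring coalition to join), and if $k=2$ the only deviations are "rightmost player of the left block moves right" and "leftmost player of the right block moves left"; I would argue that such moves are monotone in a suitable potential (e.g. the position of the boundary between the two blocks, together with the fact that each move strictly improves the deviator, who by IR never wants to be split off), so no cycle can occur. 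The real content is the case $k=3$.

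For $k=3$, label the three consecutive sub-paths (coalitions) of any state in $D$ as the \emph{left}, \emph{middle}, and \emph{right} block, with boundaries $b_1$ (between left and middle) and $b_2$ (between middle and right). Every deviation in $D$ shifts exactly one of $b_1,b_2$ by one unit, in one of the four directions; since $D$ is a cycle, the total displacement of $b_1$ and of $b_2$ around $D$ is zero, so each boundary moves left as often as it moves right. The key structural observation I would establish is a "no-crossing / ordering" constraint: because IR forbids a player from leaving to form a singleton, the left block can never become empty, the right block can never become empty, and $b_1<b_2$ always; moreover a player who moves from the middle block into the left block (making $b_1$ move left) did so because she strictly prefers the left-enlarged coalition, and for her to later return she must be re-expelled, which can only happen via a chain of moves on the right. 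I would then mimic the "rightward propagation" argument used in the proof sketch of Theorem~\ref{thm:paths:monotone}: fix the player $a$ whose deviation occurs earliest in $D$ among those involving boundary $b_2$ (or, if $b_2$ never moves, reduce to the two-block case between the left and the fused middle-right block); show that for $b_2$ to oscillate, the rightmost player $n$ of the right block must eventually be forced to leave the right block, which is impossible since $n$ has no coalition to its right and IR forbids it to go alone. This yields the contradiction.

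Concretely, the steps in order: (1) invoke Lemma~\ref{lem:size} to fix $k=|\pi_t|\le 3$ constant on $D$, and record the description of admissible moves on a path under IR; (2) dispose of $k=1$ and $k=2$ by a boundary-monotonicity / strict-improvement argument; (3) for $k=3$, set up the boundaries $b_1,b_2$ and prove the invariants "left block nonempty, right block nonempty, $b_1<b_2$"; (4) prove that if neither $b_1$ nor $b_2$ is constant along $D$, then either boundary's oscillation can be "pushed to the end of the path," forcing player $1$ or player $n$ to abandon the extreme block — contradicting IR (they cannot go alone) and the absence of an outer neighbour; (5) handle the degenerate subcase where one boundary is constant by reducing to the $k=2$ analysis on one side. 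I expect step (4) — making rigorous the claim that an oscillation of an \emph{interior} boundary forces a deviation at the \emph{extreme} vertex — to be the main obstacle; the cleanest route is probably to adapt the inductive "claim (i)/(ii)" machinery from the proof of Theorem~\ref{thm:paths:monotone}, checking that each inductive step uses only individual rationality (not full monotonicity) when $k\le 3$, which is exactly where the bound on the number of coalitions is essential.
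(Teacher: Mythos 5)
Your reduction to a constant number $k\le 3$ of coalitions via Lemma~\ref{lem:size}, and your treatment of $k=1$ and $k=2$, match the paper (its Lemma~\ref{lem:2} handles $|\pi_0|\le 2$ by exactly the ``no immediate reversal, one block shrinks monotonically'' argument you sketch). The gap is in your step (4). The inductive machinery of Theorem~\ref{thm:paths:monotone} that you propose to adapt uses monotonicity in an essential, not incidental, way: claims (i.a) and (ii) conclude that a player $a_j$ who deviated because $[a_j,a_{j+1}]_P \succ_{a_j} [\ell,a_j]_P$ cannot want to return while her current coalition still contains $[a_j,a_{j+1}]_P$, precisely because monotonicity lifts that strict preference to every superset of $[a_j,a_{j+1}]_P$. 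Under individual rationality alone this inference is unavailable: a player may prefer $[a_j,a_{j+1}]_P$ to $[\ell,a_j]_P$ and yet prefer $[\ell,a_j]_P$ to a strictly larger right coalition, which is exactly the behaviour exploited by the four-coalition counterexample (Example~\ref{counterexample:paths}). So the check you flag as ``the main obstacle'' fails, and restricting to $k\le 3$ does not by itself rescue the propagation argument; with $k=3$ the propagation terminates after one step at claim (ii), which is the step that needs monotonicity. Note also that the contradiction in the monotone proof is not that player $n$ must abandon the last block (she never moves); it is that the last propagated player can never return, and that ``never returns'' statement is false under IR.

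The paper's proof replaces propagation by a measure-decrease argument specific to $k=3$, built on two lemmas that your plan does not contain. First (Lemma~\ref{lem:4}): if a player from the left coalition and then a player from the right coalition consecutively join the central coalition (or vice versa), the dynamics converge, because afterwards only the border players of the two shrinking outer coalitions can ever move; this confines any cycle to oscillations across essentially one boundary at a time. Second (Lemma~\ref{lem:3}): when $x$ accepts a chain of arrivals $y_1,\dots,y_k$ into her coalition and then leaves for the left coalition, transitivity of the acceptances yields $[1,x]_P\succ_x [x,y_j]_P$ for every intermediate $j$ --- this is the IR-compatible substitute for monotonicity. With these, the paper takes the leftmost player $\ell^*$ oscillating between the left and central coalitions and proves, by induction on the number of players crossing that boundary between a departure of $\ell^*$ and her return, that the central coalition is strictly smaller at each return ($C^t\subsetneq C^s$); this strictly decreasing quantity contradicts cyclicity. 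To complete your proof you would need to discover this invariant (or an equivalent one); the boundary-displacement bookkeeping in your step (3) does not supply it.
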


Note that for preferences that are not necessarily monotone, we cannot rely on coalition size alone to reason about player preferences as we did in the proof of Theorem~\ref{thm:paths:monotone}. However, under the assumptions of the above theorem, deviations occur only between the central coalition and the left (or right) coalition, where the left-most (or right-most) player of the coalition is fixed. We  demonstrate that, if we assume the existence of a cycle in the IS dynamics, then players return to the central coalition only if the size of the central coalition decreases, thus yielding a contradiction.

The assumption that the preferences are IR 
is crucial for Theorem~\ref{thm:three:paths}. Without having IR 
preferences, the IS dynamics on paths may not converge even when the initial state 
consists of two coalitions (cf. Example \ref{ex:2coalitions} in 
Appendix). 


\section{Stars} \label{sec:stars}

This section deals with the special case where $(N,L)$ is a star. 
The following lemma will be exploited for proving the convergence of IS dynamics under different settings.

\begin{lemma}\label{lemma:stars}
Suppose that $(N,L)$ is a star and preferences are general. Before converging, the number of deviations of the IS dynamics in which deviating players 
cannot choose to go alone 
is $\mathcal{O}(n^2)$.
\end{lemma}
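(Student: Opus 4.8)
In a star with center $c$ and leaves $\ell_1,\dots,\ell_{n-1}$, every feasible coalition is either a singleton leaf $\{\ell_i\}$, a set of leaves together with the center $\{c\}\cup S$ for $S\subseteq\{\ell_1,\dots,\ell_{n-1}\}$, or the singleton $\{c\}$. So at any state there is exactly one "big" coalition containing the center (possibly just $\{c\}$ alone), and all other coalitions are singleton leaves. An IS deviation in which the deviating player "cannot choose to go alone" must therefore be a move in which the result is *not* a newly created singleton — i.e. a leaf joining the center's coalition, or the center leaving its coalition of leaves to join... no, the center can only move to a leaf singleton $\{\ell_i\}$, forming $\{c,\ell_i\}$, and then the leaves it abandoned all become singletons. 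I would first carefully enumerate these move types, and in particular identify a move as a "merging move" (a leaf joins the center's coalition, strictly enlarging it) versus a "center move" (the center jumps to some leaf).

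**The plan.** The plan is to define a potential that tracks the center's coalition. For each leaf $\ell_i$, consider the coalition $\pi(\ell_i)$; each leaf is in a state "with center" or "alone". I would track the *set* $A(\pi) = \{i : \ell_i \text{ shares a coalition with } c\}$ together with the identity of whether the center is isolated. A leaf joining the center's coalition strictly grows $A$; the center jumping to a new leaf $\ell_j$ resets $A$ to $\{j\}$. The key observation I would aim to prove: between two consecutive center moves, only merging moves occur, and each strictly enlarges $A$, so there are at most $n-1$ merging moves between center moves. Then I would bound the number of center moves. Here individual rationality of the center is not assumed, so the center *can* move; but the excerpt excludes the "go alone" deviations, and a center move to $\{c,\ell_j\}$ is not a "go alone" move — so I must bound center moves by another argument. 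The natural candidate: when the center moves from $\{c\}\cup S$ to $\{c,\ell_j\}$, the center strictly prefers $\{c,\ell_j\}$ to $\{c\}\cup S$; and $\ell_j$ must accept, so $\{c,\ell_j\}\succeq_{\ell_j}\{\ell_j\}$. I would try to show the center's coalition, when viewed at the moments just before each center move, forms a sequence that cannot repeat too often — but without monotonicity or IR this needs care. A cleaner route: show that after a leaf $\ell_j$ has once "received" the center (a center move into $\ell_j$), and then the center later leaves, $\ell_j$ can never again trigger a center move into it *unless* something changed; combined with the $n-1$ leaves this caps center moves at $O(n)$, times $O(n)$ merging moves between them, giving $O(n^2)$.

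**Main obstacle.** The hard part will be bounding the number of center moves without any monotonicity or individual-rationality hypothesis on the center. The merging-move count between center moves is easy (strict growth of a subset of an $(n-1)$-element set). The subtle point is that, absent IR for the center, I cannot directly use a welfare or size potential; I must exploit the *structure of which deviations are allowed* — specifically that the excluded "go alone" deviations are exactly the ones that would otherwise let the center oscillate cheaply, and that every permitted center move is into a leaf singleton whose occupant must *accept*, i.e. $\{c,\ell_j\}\succeq_{\ell_j}\{\ell_j\}$. I expect the argument to go: fix the first center move into each leaf; I claim the center enters each leaf at most a constant number of times, because each such entry requires the center to strictly improve over its previous (center-containing) coalition, and the sequence of center-containing coalitions the center visits, restricted to the subsequence of "states right after a center move," must be strictly increasing in the center's preference order — hence has length at most the number of distinct coalitions the center ever sees, which is $O(n)$ when we only count distinct *leaves visited*... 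This last step is where I would need to be most careful, possibly replacing "number of center moves" by "number of *distinct* destination leaves of center moves" and absorbing repeats into the merging-move accounting. Once center moves are $O(n)$ and merging moves between them are $O(n)$, the total is $O(n^2)$, which is the claim.
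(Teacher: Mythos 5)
Your proposal is correct and follows essentially the same route as the paper: classify deviations into center moves and leaf-joins, bound the leaf-joins between consecutive center moves by $n-1$ via the monotone growth of the central coalition, and bound the center moves by $n-1$ because the center's coalition is weakly improving from its own viewpoint throughout (it must accept every joining leaf) and strictly improving at each of its own moves, so the two-player destination coalitions $\{c,\ell_j\}$ are pairwise distinct. The step you flag as delicate closes exactly as you sketch it, and the fallback accounting you mention is not needed.
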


As an immediate consequence of Lemma \ref{lemma:stars}, we obtain that the IS dynamics always converge under IR 
preferences.

\begin{theorem}\label{IS:stars}
Suppose that $(N,L)$ is a star. 
Under individually rational preferences, the IS dynamics converge in a number of steps which is $\mathcal{O}(n^2)$.
\end{theorem}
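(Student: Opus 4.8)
The plan is to derive Theorem~\ref{IS:stars} as an immediate corollary of Lemma~\ref{lemma:stars}. The key observation is that under individually rational (IR) preferences, no player ever performs an IS deviation in which she ``goes alone,'' i.e.\ a deviation from $\pi(i)$ to $T=\emptyset$: such a move would require $\{i\} \succ_i \pi(i)$, which directly contradicts individual rationality (recall $\pi(i)\in\calF(i)$, so IR gives $\pi(i)\succeq_i\{i\}$). Hence, in the IR setting on a star, \emph{every} deviation of the IS dynamics is one in which the deviating player does not choose to go alone. By Lemma~\ref{lemma:stars}, the number of such deviations before convergence is $\mathcal{O}(n^2)$, so the dynamics must terminate after $\mathcal{O}(n^2)$ steps, in particular they converge.

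Concretely I would structure the proof as follows. First I recall that on a star $(N,L)$ the feasible coalitions are exactly: the singletons, the pairs $\{c,x\}$ where $c$ is the center, and the sets of the form $\{c\}\cup S$ with $S$ a nonempty set of leaves (plus $\{c\}$ itself); in any feasible partition at most one coalition contains the center and has size $\geq 2$, and all other coalitions are singleton leaves. This structural remark is only needed to make Lemma~\ref{lemma:stars} applicable, which we are allowed to assume. Second, I argue the IR exclusion of ``going alone'' as above: if player $i$ deviates from $S=\pi(i)$ to $T$, she needs $T\cup\{i\}\succ_i S$; taking $T=\emptyset$ yields $\{i\}\succ_i S$, impossible under IR. Therefore every deviation in the IR dynamics on a star falls into the category counted by Lemma~\ref{lemma:stars}. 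Third, I invoke Lemma~\ref{lemma:stars} to conclude that the total number of deviations is $\mathcal{O}(n^2)$; since the dynamics perform a deviation at every iteration of the while-loop of Algorithm~\ref{alg:is:general} until no IS deviation exists, after $\mathcal{O}(n^2)$ iterations the loop must halt, producing an individually stable partition. This also certifies convergence in the sense of the paper (no cyclic sequence can exist, since any cyclic sequence would be an infinite one).

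There is essentially no obstacle here once Lemma~\ref{lemma:stars} is in hand; the only subtlety worth making explicit is the equivalence between ``the while-loop of Algorithm~\ref{alg:is:general} terminates'' and ``the IS dynamics converge.'' If the number of possible deviations is finite for \emph{every} execution, then no execution can be an infinite sequence, and in particular none can be cyclic (a cyclic sequence, being repeatable, would generate an infinite execution). So bounding the number of deviations uniformly by $\mathcal{O}(n^2)$ is strictly stronger than non-cyclicity, and the stated bound on the number of steps follows verbatim. I would close with a one-line remark that the existence of the IS outcome is anyway guaranteed on stars (which are forests) by~\citet{igarashi2016hedonicgraph}, so the content of the theorem is genuinely the reachability/convergence claim together with the quantitative $\mathcal{O}(n^2)$ bound.
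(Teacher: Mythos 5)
Your proposal is correct and follows essentially the same route as the paper: the paper's proof also observes that individual rationality rules out any deviation to the empty coalition, so every deviation is of the kind counted by Lemma~\ref{lemma:stars}, yielding the $\mathcal{O}(n^2)$ bound. Your additional remarks on loop termination versus non-cyclicity are fine but not needed beyond what the paper states.
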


In the following, we show that it is possible to obtain a similar result of convergence also for general preferences, by imposing that the dynamics start from a state verifying a suitable property.
To this respect, we will say that a coalition $S \subseteq N$ is {\em individually rational} (IR) if every player $i \in S$ weakly prefers $S$ to $\{i\}$. Moreover, a partition $\pi$ of $N$ is said to be IR if every $S \in \pi$ is individually rational. 

\begin{theorem}\label{IS:stars:IRstate}
Suppose that $(N,L)$ is a star and the initial state is individually rational. 
Under general preferences, the IS dynamics converge in a number of steps which is $\mathcal{O}(n^2)$.
\end{theorem}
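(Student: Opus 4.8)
The plan is to exploit the extreme simplicity of feasible partitions on a star together with Lemma~\ref{lemma:stars}: I will show that individual rationality of the \emph{initial state} is preserved throughout the dynamics, which forces every single deviation to be one in which the deviating player cannot go alone — exactly the deviations that Lemma~\ref{lemma:stars} already bounds by $\mathcal{O}(n^2)$.

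First I would record the structure. With $c$ the center, a connected vertex set is either a singleton or a set containing $c$, so every feasible partition consists of one ``central'' block $B$ (possibly $B=\{c\}$) plus singleton leaves. From this I enumerate the possible IS deviations: a singleton leaf can only move into $B$; a non-central leaf of $B$ can only leave $B$ to become a singleton (since pairing it with another leaf is infeasible, two leaves being non-adjacent), and $B\setminus\{\ell\}$ then stays connected because it still contains $c$; and $c$ can only leave $B$ either to become a singleton or to join some singleton leaf $\{\ell'\}$, in which case $B\setminus\{c\}$ breaks up into singleton leaves.

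The core of the argument is an induction on the number of deviations showing that every state $\pi$ reachable from an IR initial state is IR, equivalently — singletons being trivially IR — that its central block $B$ satisfies $B\succeq_j\{j\}$ for all $j\in B$. The base case is the hypothesis. For the step, the two deviations that could a priori destroy IR, namely a leaf of $B$ leaving to be alone or $c$ leaving $B$ to be alone, are both \emph{blocked} by the inductive hypothesis, since each would require the deviating player to strictly prefer being alone to $B$, contradicting IR of $B$. The two surviving deviations preserve IR: when a leaf $\ell$ joins $B$, transitivity with the acceptance conditions gives $B\cup\{\ell\}\succeq_j\{j\}$ for each old member $j$, while $\ell$ wanting the deviation gives $B\cup\{\ell\}\succ_\ell\{\ell\}$; when $c$ moves to $\{\ell'\}$, the new block $\{c,\ell'\}$ is IR because $\ell'$ accepted ($\{c,\ell'\}\succeq_{\ell'}\{\ell'\}$) and because $\{c,\ell'\}\succ_c B\succeq_c\{c\}$, and the leaves shed from $B$ become trivially-IR singletons.

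Finally I close the loop: in every reachable state each player weakly prefers her current coalition to being alone, so a deviating player never has ``go alone'' as an available improving move; hence \emph{every} deviation of the dynamics is counted by Lemma~\ref{lemma:stars}, and the total number of steps is $\mathcal{O}(n^2)$. I expect the main obstacle to be the case analysis for the center's move onto a singleton leaf: one must verify both that the resulting two-element block is IR and that the invariant is not disturbed by the simultaneous dissolution of $B\setminus\{c\}$ into singletons; the leaf-joining case and the impossibility of the alone-moves are comparatively routine.
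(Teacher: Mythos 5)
Your proposal is correct and follows essentially the same route as the paper's proof: establish that individual rationality of the state is invariant under IS deviations on a star, deduce that no player ever deviates to go alone, and then invoke Lemma~\ref{lemma:stars} to bound the number of steps by $\mathcal{O}(n^2)$. The only difference is that you spell out the case analysis (leaf joining the central coalition, center moving onto a singleton leaf, and the two blocked ``go alone'' moves) that the paper dismisses as easily verified.
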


It is worth noting 
that starting from an IR state is a weaker requirement than imposing IR preferences; in fact, when players have IR preferences, any state (including the initial one) is IR.
Note that without the assumption that the initial state is IR (and therefore also without the assumption of IR preferences), the IS dynamics may not converge on a star as illustrated by the following example. 
\begin{example}\label{counterexample:star}
Consider a star with center $d$ and leaves $a,b,c$. 
Player $d$ is indifferent between all coalitions and the preferences of the other players are given as follows: 
\begin{itemize}
\item $a: \{a,b,d\} \succ \{a\} \succ \{a,c,d\} \succ \{a,d\} \succ \{a,b,c,d\}$ 
\item $b: \{b,c,d\} \succ \{b\} \succ \{a,b,d\} \succ \{b,d\}\succ \{a,b,c,d\}$
\item $c: \{a,c,d\} \succ \{c\} \succ \{b,c,d\} \succ \{c,d\}\succ \{a,b,c,d\}$
\end{itemize}

A cyclic sequence of states can be as follows: $\{\{a\},\{b,c,d\}\}$ $\substack{c\\ \longrightarrow}$ 
$\{\{a\},\{b,d\},\{c\}\}$ 
$\substack{a\\ \longrightarrow}$ 
$\{\{a,b,d\},\{c\}\}$ 
$\substack{b\\ \longrightarrow}$ 
$\{\{a,d\},\{c\},\{b\}\}$ 
$\substack{c\\ \longrightarrow}$ 
$\{\{a,c,d\},\{b\}\}$ 
$\substack{a\\ \longrightarrow}$ 
$\{\{c,d\},\{b\},\{a\}\}$ 
$\substack{b\\ \longrightarrow}$ 
$\{\{a\},\{b,c,d\}\}$.

\end{example}

Example \ref{example:LB_star} in the appendix 
shows that the bounds on the time of convergence provided in Theorems \ref{IS:stars} and \ref{IS:stars:IRstate} are asymptotically tight.

Finally, 
even when the players' preferences are IR, 
the IS dynamics 
may cycle in a 
star 
with an extra 
node connected to a leaf (cf. Example \ref{ex:almoststar} in the appendix). 

\section{Trees} \label{sec:trees}


In this section, we assume that $(N,L)$ is a tree. We first exhibit an example of non convergence where the valuations 
are non negative and additively separable, which falls in the case of  monotone preferences. Then, we move on to {\em local} additively separable preferences and show that the IS dynamics always converge.
At the end of the section, some additional results holding for the special case of $(N,L)$ being a path, 
under LAS preferences, are provided.

\begin{example} \label{counterexample:mon:tree} Consider the following tree. 
\begin{center}
\begin{tikzpicture}[node distance={10mm}, thick, main/.style = {draw, circle}, scale=0.9, transform shape] 
\node[main] (1) {$x_0$}; 
\node[main] (2) [right of=1]  {$a_0$}; 
\node[main] (3) [right of=2] {$T$}; 
\node[main] (4) [right of=3] {$a_2$}; 
\node[main] (5) [right of=4] {$x_2$}; 
\node[main] (6) [below of=3] {$a_1$}; 
\node[main] (7) [left of=6] {$x_1$};

\draw[-] (1) -- (2);
\draw[-] (2) -- (3);
\draw[-] (3) -- (4);
\draw[-] (4) -- (5);
\draw[-] (3) -- (6);
\draw[-] (6) -- (7);

\end{tikzpicture} 
\end{center}
Suppose the players have additively separable preferences with the following values: for any $i=0,1,2$, $a_i$ has value $1$ for $x_i$, value $2$ for $a_{i+1}$, and $0$ otherwise (subscripts are modulo $3$). Players $T$ and $x_i$ have value $0$ for any other player, for any $i=0,1,2$.
Since all the values are non negative, the described preferences are monotone.  
The IS dynamics may cycle as follows. 
\begin{itemize}
\item $\pi_0= \{\{x_0,a_0\} ,\{T,a_1\} ,\{x_1\},\{a_2,x_2\}\}$. 
\item $\pi_1=\{\{x_0\} ,\{T,a_0,a_1\} ,\{x_1\} ,\{a_2,x_2\}\}$. 
\item $\pi_2=\{\{x_0\} ,\{T,a_0\} ,\{a_1,x_1\},\{a_2,x_2\}\}$. 
\item $\pi_3=\{\{x_0\}, \{T,a_0,a_2\}, \{a_1,x_1\},\{x_2\}\}$. 
\item $\pi_4=\{\{x_0,a_0\} ,\{T,a_2\} ,\{a_1,x_1\},\{x_2\}\}$.  
\item $\pi_5= \{\{x_0,a_0\}, \{T,a_1,a_2\} ,\{x_1\}\{x_2\}\}$. 
\end{itemize}
Here, a sequence of IS deviations is made by $a_0,a_1,a_2,a_0,a_1$, and finally by $a_2$, resulting in the initial state $\pi_0$. 

\end{example}

Note that in Example \ref{counterexample:mon:tree}, one can duplicate the $a_i$ players and reproduce a cycle in the dynamics where all the values are either 1 or 0 (cf. Appendix).

\subsection{Local additively separable preferences}

LAS preferences are considered in this section. Under LAS preferences the IS dynamics can cycle if $(N,L)$ includes a cycle (cf. Example \ref{counterexample:cycles}), but  
the IS dynamics always converge when $(N,L)$ is acyclic.

\begin{theorem} \label{thm:trees}
In a graph hedonic game with LAS preferences over a tree the IS dynamics always converge. 
\end{theorem}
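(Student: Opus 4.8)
The plan is to exhibit a potential function on states that strictly increases with every IS deviation. Since the number of feasible partitions is finite, a strictly monotone potential rules out cycles and thus forces convergence. The natural first candidate is the utilitarian social welfare $\Phi(\pi)=\sum_{i\in N}\sum_{j\in\pi(i)\setminus\{i\}}v_i(j)$, but this need not increase: the player performing a ``better move'' gains, yet members of the coalition she leaves may lose a positive amount (recall $v$ is not symmetric), so $\Phi$ can go down. Instead I would work with the vector of coalition structures and use the tree to linearize things. Root the tree $T$ at an arbitrary vertex $\rho$. For each player $i$, her coalition $\pi(i)$ is a connected subtree; let $\mathrm{top}(\pi(i))$ be the vertex of $\pi(i)$ closest to $\rho$ (its ``anchor''), and let $\mathrm{depth}(i)$ be the distance from $\rho$. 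The key structural fact is that when $i$ deviates from $S=\pi(i)$ to $T\cup\{i\}$, feasibility forces $i$ to be adjacent to $T$, so in the tree $T\cup\{i\}$ is again a connected subtree; moreover $S\setminus\{i\}$ splits into the maximal subtrees hanging off $i$.

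The main step is to define the right potential. I would try a lexicographic / weighted combination that captures two effects simultaneously: (a) an IS deviation by $i$ strictly increases $i$'s own utility $u_i(\pi)=\sum_{j\in\pi(i)\setminus\{i\}}v_i(j)$, and (b) each accepting player $j\in T$ has $u_j$ weakly increasing (by nonnegativity of $v_j(i)$ and the definition of ``accepts''), while the players in $S\setminus\{i\}$ keep exactly the same neighbors within their new coalition component except for possibly losing $i$ — but a player in $S\setminus\{i\}$ loses the term $v_k(i)$ only if $k$ was adjacent to $i$, i.e.\ only the (at most) several neighbors of $i$ inside $S$ are affected, and they lose a \emph{nonnegative} amount. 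So $\Phi$ can only decrease by at most $\sum_{k\in N(i)\cap S}v_k(i)$, while $i$ gains at least the smallest positive increment. This suggests a two-coordinate potential: first coordinate $=$ the number of \emph{edges cut} by $\pi$, i.e.\ $c(\pi)=|\{(i,j)\in L: \pi(i)\neq\pi(j)\}|$ with the convention that we actually track the multiset of coalition sizes or the number of coalitions $|\pi|$; second coordinate $=$ social welfare $\Phi(\pi)$; ordered so that we prefer fewer coalitions, then higher welfare. The claim to verify is: every IS deviation either strictly decreases $|\pi|$, or keeps $|\pi|$ fixed and strictly increases $\Phi$. The first case is when $S\setminus\{i\}$ stays connected (or is empty) — then no coalition is split off and, crucially, $i$ had value $0$ for everyone she abandoned? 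No — that's false in general, so even when $|\pi|$ is fixed $\Phi$ might not rise.

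Because of that gap, I expect the honest argument needs a finer invariant, and \textbf{this is where the main obstacle lies}. The fix I would pursue: exploit that $v_i(j)>0\Rightarrow (i,j)\in L$ \emph{together with the tree structure} to show $i$ has at most one neighbor ``below'' in a suitable orientation that she can lose while gaining. Concretely, orient each edge of $T$ toward $\rho$; for a coalition $C$ with anchor $a=\mathrm{top}(C)$, every vertex of $C$ other than $a$ has its parent in $C$. When $i$ deviates to $T\cup\{i\}$, consider whether $i$ becomes the new anchor's descendant or the new anchor itself. One shows the anchor of $i$'s coalition moves strictly closer to $\rho$, OR $i$'s depth-within-its-coalition strictly decreases, OR $|\pi|$ strictly decreases, OR $\Phi$ strictly increases — and that the first three quantities cannot cycle because they are bounded integers that move monotonically along certain ``phases.'' Assembling these into one genuinely monotone lexicographic potential $(\,|\pi|,\ \text{sum of anchor-depths over coalitions (to be minimized)},\ \Phi\,)$ — with a careful check that an IS deviation never simultaneously increases $|\pi|$ and that when $|\pi|$ is constant the pair (anchor-depth-sum decreasing, else $\Phi$ increasing) is strictly monotone — is the crux. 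I would verify the base cases on paths first (consistent with Theorem~\ref{thm:paths:monotone} and the LAS column of Table~\ref{table:HGT}), then lift to trees by the subtree-splitting observation. The routine parts are the case analysis on which neighbors of $i$ lie in $S$ versus $T$ and bounding the welfare change; the delicate part is proving the anchor-depth coordinate is actually monotone under every IS move, which is the step I would spend the most care on.
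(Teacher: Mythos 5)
Your proposal is a plan rather than a proof, and the specific lexicographic potential you put forward breaks at its very first coordinate. On a tree, an IS deviation \emph{can} strictly increase the number of coalitions $|\pi|$: take a player $i$ with two neighbours $j_1,j_2$ inside her current coalition $S$ with $v_i(j_1)=v_i(j_2)=1$, and a third neighbour $j_3$ in an adjacent coalition $T$ with $v_i(j_3)=3$. Since $T\cup\{i\}$ is connected and every value is nonnegative (so all of $T$ accepts), $i$ deviates, and $S\setminus\{i\}$ splits into two components, so $|\pi|$ goes up. (The analogue of Lemma~\ref{lem:size} is special to paths, where feasibility of $T\cup\{i\}$ forces the deviator to sit at the boundary of $S$; it does not transfer to trees.) The anchor-depth coordinate suffers from the same splitting phenomenon: the new components of $S\setminus\{i\}$ acquire anchors that are children of $i$, hence strictly deeper, so the sum of anchor depths can also rise. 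You candidly flag that the monotonicity of these coordinates is the unverified crux, and indeed it is false as stated; the disjunction ``one of the four quantities improves'' does not yield a lexicographic potential unless the higher-priority coordinates never worsen, which they do. Your diagnosis that plain social welfare fails (because $v$ is asymmetric) is correct, but the repair you propose does not close the gap.

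The paper's argument is of a different nature: rather than a state potential, it counts deviations directly. Each edge carries a label recording which endpoint most recently deviated; the key combinatorial facts are that a deviating player builds exactly one edge (Claim~\ref{tree:claim1}, since two neighbours in the same target coalition would create a cycle), that her utility after deviating equals the single value $v_i(\beta^t)$, and that between two consecutive times $i$ builds the edge to the same neighbour $j$, some neighbour must break an edge \emph{labelled} $i$ (otherwise $i$'s utility never drops below $v_i(j)$ and the second build could not be improving). This yields the recursive bound of Lemma~\ref{tree:lemma:join}, which is then unrolled bottom-up over the rooted tree (Lemmas~\ref{tree:moves:i} and~\ref{tree:moves:r}) to bound the total number of deviations of every player by a finite quantity, hence convergence. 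Note also that Proposition~\ref{prop:exp_time} shows the dynamics can take exponentially many steps, which is a useful sanity check against any potential built from polynomially bounded integer coordinates: whatever strictly monotone quantity one uses must admit exponentially many levels, so the welfare term would have to carry essentially all of the weight, and that is precisely the term you could not control.
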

\begin{proof}[Proof sketch] 
 We denote by $N(i)$ the set of neighbours of $i$  (we assume $i \notin N(i)$) and by $L(i)$ the set of edges incident to $i$.
For every feasible partition $\pi$ we define $L_{\pi} = \{(i,j) \in L : \pi(i) = \pi(j)\}$ and $\overline{L}_{\pi} = L\setminus L_{\pi}$ and we refer to them as the set of \emph{built} and \emph{broken} edges in $\pi$, respectively.
For every player $i$ and feasible partition $\pi$ we define $L_{\pi}(i) = \{(i,j): (i,j) \in L(i) \cap  L_{\pi}\}$  and $\overline{L}_{\pi}(i) = L(i)\setminus L_{\pi}(i)$, that are the set of built and broken edges incident to $i$, and we denote by $u_i(\pi)$ the utility of $i$ in $\pi$.

The fact that $(N, L)$ is a tree directly implies that every feasible partition $\pi$ satisfies the  following  property.  
\begin{claim}\label{tree:claim1}
Let $\pi$ be any feasible partition.
Given  any $i\in N$ then, for every $j, j'  \in N(i)$ such that  $j'\neq j$ and $(i,j) \in \overline{L}_{\pi}(i)$, it holds that $\pi(j) \neq \pi(j')$. 
\end{claim}

The deviation of a player $i$ in a feasible partition $\pi$ can be interpreted as selecting a neighboring  vertex $j$ that is incident to an edge in $\overline{L}_{\pi}(i)$.  
This observation enables us to describe the IS dynamics  
in a tree as outlined in Algorithm \ref{alg:is:tree}. 
In order to prove the theorem, we incorporate a labeling scheme $\ell$, where $\ell: L \rightarrow N \cup \{\bot\}$ is a function labeling the edges of the graph as specified in Algorithm \ref{alg:is:tree}.

 \begin{algorithm}                      
\caption{IS dynamics for trees with labeling}         
\label{alg:is:tree}                          
\begin{algorithmic}[1]                  
\REQUIRE a graph hedonic game 
 $(N,L,(\succeq_i)_{i\in N})$, 
such that $(N, L)$ is a tree, and an initial feasible partition $\pi^0$
\ENSURE A feasible partition $\pi$ of $N$ 
\STATE $t \gets 0$
\FOR{each $e \in L$}
\STATE ${\ell^{t}}(e) \gets \bot$ \label{alg:tree:label-init}
\ENDFOR
\WHILE{there exists $(\alpha^t, \beta^t) \in \overline{L}_{\pi^{t}}(\alpha^t)$ 
such that $\alpha^t$ has an IS deviation from $\pi^{t}(\alpha^t)$ to $\pi^{t}(\beta^{t})$
}\label{alg:tree:phase}
\STATE  
$\pi^{t+1} \!\gets\! (\pi^{t} \setminus \{\pi^{t}(\alpha^t),\pi^{t}(\beta^t)\})\cup \{\pi^{t}(\beta^t)\cup \{\alpha^t\}\} \cup \{\, S \mid S~\mbox{is a maximal connected subset of}~\pi^{t}(\alpha^t)\!\setminus\!\{\alpha^t\}\,\}.$\label{alg:tree:ISdeviation}
\FOR{each $e \in L$} 
\IF{ $e \in L_{\pi^t}(\alpha^t)  
\cup \{(\alpha^t,\beta^t)\}$}
\STATE ${\ell^{t+1}}(e) \gets \alpha^t$ \label{alg:tree:label-update1}
\ELSE
\STATE ${\ell^{t+1}}(e) \gets {\ell^{t}}(e)$ \label{alg:tree:label-update2}
\ENDIF
\ENDFOR
\STATE $t \gets t+1$
\ENDWHILE
\RETURN $\pi$
\end{algorithmic}
\end{algorithm}
\noindent Algorithm \ref{alg:is:tree} works as follows. 
At time step $t\geq 0$ player $\alpha^t$ performs a deviation from $\pi^{t}(\alpha^t)$ to $\pi^{t}(\beta^t)$ (see Figure \ref{fig:tree:t-step} in the Appendix); $\pi^{t+1}$ denotes the partition obtained after  this deviation (line \ref{alg:tree:ISdeviation}).
As a consequence of Claim \ref{tree:claim1}, 
$(\alpha^t, \beta^t)$ is the only edge in $L_{\pi^{t+1}}(\alpha^t)$;   
therefore we say that $\alpha^t$  \emph{builds} $(\alpha^t,\beta^t)$ and \emph{breaks} all the edges in $L_{\pi^{t}}(\alpha^t)$ at time $t$.
Moreover, as a consequence of Claim \ref{tree:claim1} and LAS preferences, $u_{\alpha^t}(\pi^{t+1}) =  v_{\alpha^t}(\beta^{t})$.

Algorithm \ref{alg:is:tree} incorporates also a labeling scheme. 
At every time step $t\geq 0$, it assigns a label $\ell^t(e)$ to each edge $e = (i,j)  \in L$; this label gets  value in  $\{\bot, i, j\}$.
The label tracks which of the two endpoints of the edge made the most recent deviation in previous time steps, thereby determining the current status (built or broken) of the edge.
Namely, at time $t=0$  the label of each  edge  $e\in L$ is set  to $\bot$  (line \ref{alg:tree:label-init}), which means that none of the  endpoints of $e$  has   performed any deviation. 
At time $t+1\geq 1$ the labels of all the edges in $L_{\pi^t}(\alpha^t) \cup \{(\alpha^t,\beta^t)\}$ are set to $\alpha^t$ (line \ref{alg:tree:label-update1}), while the labels of all the remaining edges remain unchanged (line \ref{alg:tree:label-update2}). 
The labeling scheme satisfies the following property.

  \begin{claim}\label{tree:claim:one-edge}
  For every $i\in N$ and $t\geq 0$, there exists at most one edge $e$ in $L_{\pi^t}(i)$ such that ${\ell^t(e)} = i$.
  \end{claim}

Now, let us examine how the utility of any player $i$ changes throughout the dynamics given by an execution of Algorithm \ref{alg:is:tree}.
Note that the utility of $i$ is influenced only by the deviations of $i$ and the deviations of the players in $N(i)$. 
Moreover, if  $v_i(j) = 0$, $i$ would never  perform a deviation in which she builds an edge with  $j$. 
Hence, the utility of $i$ strictly increases after each deviation by $i$, 
it may increase after a neighbour $j$ builds an edge with $i$ (it does not change if $v_i(j) = 0$),
it may decrease after a neighbor $j$ breaks an edge with $i$ (it does not change if $v_j(i) = 0$), while 
it strictly decreases after a neighbor $j$ breaks the edge with $i$ when its label is equal to $i$ (in fact, since the  edge has been previously built by $i$, it must hold that $v_i(j) > 0$). 

Next, we will bound the number of times a player $i$ can deviate by building an edge with a given neighbor $j$.
In order to do so, we need to introduce some notation. 
Let  $T_{Q}(i)$ denote the set of time  steps in which $i$ performs a deviation by building an edge connecting herself with any node in $Q\subseteq N(i)$. 
For $j\in N(i)$, we simplify the notation by writing $T_j(i)$ instead of $T_{\{j\}}(i)$.
Trivially, we have  $|T_{Q}(i)| = \sum_{j\in Q}|T_j(i)|$.
We denote by $T(i)$ the set of time steps in which $i$ performs a deviation, i.e., $T(i)  = T_{N(i)}(i)$. 
Since $i$ may also break edges with some neighbors during each deviation, for $j' \in N(i)$ and $Q
\subseteq N(i) \setminus \{j'\}$, we denote by $T^{j'}_{Q}(i)$ the set of time steps in which $i$ performs a deviation  by building an edge with any node in $Q$ while breaking the edge with $j'$ when its label is equal to $j'$.
For $j\in N(i)$, we simplify the notation by writing $T^{j'}_j(i)$ instead of $T^{j'}_{\{j\}}(i)$.
Trivially, we have  $|T^{j'}_{Q}(i)| = \sum_{j\in Q}|T^{j'}_j(i)|$.
We define $T^{j'}(i) = T_{N(i)\setminus \{j'\}}^{j'}(i)$, that is the set of time steps in which $i$ breaks the edge with $j' \in N(i)$ when its label is equal to $j'$. 
Also in this case we have $|T^{j'}(i)| = \sum_{j\in N(i)\setminus \{j'\}} |T^{j'}_{j}(i)|$.
 We are ready to show a bound on the number of times $i$ builds an edge. 


\begin{lemma}\label{tree:lemma:join}
For every $i\in N$ and $Q\subseteq N(i)$, 
\begin{align}
    |T_{Q}(i)|
    &\leq |Q|\left(1 + \sum_{q \in N(i)} |T^{i}(q)|\right). \nonumber
\end{align}  
\end{lemma}

Now, let us consider a player $r$ and let $\hat G$ be the tree $(N,L)$ rooted in $r$.
Our goal is to bound the  number of deviations of $r$. 
We denote by $C_i$ the  set of children of $i$, by $D_i$  the  set of players in the  subtree of  $\hat G$ rooted in $i$ (including $i$) and by $p(i)$ the  parent of $i\neq r$.
Notice that  
$C_r = N(r)$, $D_r = N$, $C_i = N(i) \setminus \{p(i)\}$ for every $i\neq r$, and $C_i = \emptyset$ and $D_i = \{i\}$ for every leaf $i$. 
For every $i$, let $d_i$ be the maximum distance between $i$ and any leaf in $D_i$ ($d_i = 0$ for every leaf $i$). 
For every path $\sigma_{j}^i=\langle j=q_0, q_1, \ldots, q_{k}=i\rangle$ such that $k\geq 0$ and   $q_{h+1}  = p({q_h})$ for every $h=0\ldots,k-1$, we define $m_j^i = \prod_{h=0}^k|C_{q_h}|$ (notice that $m^i_j=0$ for every leaf $j$).

\begin{lemma}\label{tree:moves:i}
For every $i\in N\!\setminus\! \{r\}$,
$
	|T^{p(i)}_{C_i}(i)| \leq   \sum_{j\in D_i} m_j^i
 $.
\end{lemma}

\begin{lemma}\label{tree:moves:r}
For the root $r$, 
$
 |T{(r)}| \leq \sum_{j\in N} m^r_j
 $.
\end{lemma}

\begin{proof}
   	Since, $N(r) = C_r$, from Lemma \ref{tree:lemma:join} we have 
	\begin{align}
   |T(r)| = |T_{C_r}(r)|  
   &\leq |C_r|\big(1 + \sum_{q \in C_r} |T^{r}(q)|\big). \nonumber
    \end{align}
    
Moreover, since $T^{r}(q) = T^{p(q)}_{C_q}(q)$, we can apply Lemma \ref{tree:moves:i} to the previous inequality and obtain 
	\begin{align}
   |T(r)|  
   &\leq |C_r|\big(1 + \sum_{q \in C_r}  \sum_{j\in D_q} m_j^q\big) =  \sum_{j\in D_r} m^r_j. \nonumber
    \end{align}
\end{proof}

The theorem follows from the arbitrary  choice of $r$. 
\end{proof}

\begin{cor}\label{cor:tree} 
In graph hedonic games with LAS preferences, the IS dynamics over a path converge within $2n^2$ deviations.
\end{cor}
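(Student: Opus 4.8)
The plan is to specialise the machinery behind Theorem~\ref{thm:trees} to the case where the tree $(N,L)$ is a path, and then to add up the per-root deviation bounds. Since a path is in particular a tree, the construction of Algorithm~\ref{alg:is:tree} and the inequality of Lemma~\ref{tree:moves:r} apply: for \emph{every} choice of root $r\in N$, rooting the path at $r$ yields $|T(r)|\le \sum_{j\in N} m^r_j$. As each IS deviation is performed by exactly one player, the total number of deviations of the dynamics equals $\sum_{r\in N}|T(r)|$, so it suffices to prove $\sum_{r\in N}\sum_{j\in N} m^r_j\le 2n^2$.

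First I would evaluate $\sum_{j\in N} m^r_j$ for a fixed root $r$. On a path rooted at $r$, the root has $|C_r|=\deg(r)\le 2$, every non-root interior vertex of the path has exactly one child, and the geometric endpoint(s) of the path distinct from $r$ are the leaves of the rooted tree and have no children. The branch from a vertex $j$ up to $r$ is just the run of consecutive vertices between $j$ and $r$, so the only vertex on it that can be a leaf of the rooted tree is $j$ itself (the intermediate vertices have degree $2$, and $r$ is the root). Hence $m^r_j=0$ exactly when $j$ is an endpoint different from $r$ (the factor $|C_j|=0$ kills the product), and otherwise the product telescopes to a single factor $|C_r|$ times a string of $1$'s, i.e.\ $m^r_j=|C_r|$. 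Therefore
\[
\sum_{j\in N} m^r_j \;=\; |C_r|\cdot\bigl|\{\,j\in N : j \text{ is not an endpoint distinct from } r\,\}\bigr|
=\begin{cases} 2(n-2), & r \text{ interior},\\[2pt] n-1, & r \text{ an endpoint}. \end{cases}
\]

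Finally I would sum over the $n$ choices of root: the two endpoints contribute $n-1$ each, and each of the $n-2$ interior vertices contributes $2(n-2)$, giving
\[
\sum_{r\in N}\sum_{j\in N} m^r_j \;\le\; 2(n-1)+(n-2)\cdot 2(n-2)\;=\;2(n-1)+2(n-2)^2\;=\;2n^2-6n+6\;\le\;2n^2,
\]
the last step using $n\ge 2$. This bounds the total number of deviations by $2n^2$ and in particular re-proves convergence on paths. I do not anticipate a real obstacle: the only delicate point is the bookkeeping in the telescoping evaluation of $m^r_j$ — in particular treating each endpoint $\neq r$ as a leaf so that its $m^r_j$ vanishes — after which everything is immediate from Lemma~\ref{tree:moves:r}.
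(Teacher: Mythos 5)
Your proposal is correct and follows essentially the same route as the paper: instantiate Lemma~\ref{tree:moves:r} for each choice of root $r$ on the path, observe that $|C_r|\le 2$ and $|C_j|=1$ for every non-root non-leaf $j$ so that each $m^r_j\le 2$, and sum the resulting per-player bounds over all $n$ roots. The paper simply uses the cruder estimate $|T(r)|\le 2n$ for every $r$, whereas you evaluate $\sum_{j}m^r_j$ exactly (distinguishing endpoint roots from interior roots) to get $2n^2-6n+6$; this is a harmless sharpening, not a different argument.
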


The following proposition shows that the IS dynamics can require an exponential number of deviations before converging on trees under LAS preferences, by providing an instance with a player performing a number of deviations matching the bound of Lemma \ref{tree:moves:r}, and thus also showing that this bound is tight. 
Finally, Example \ref{n_square_path} in the appendix 
shows that the IS dynamics can require $\Omega(n^2)$ steps for converging on a path under LAS preferences, thus proving that the bound of Corollary \ref{cor:tree} is asymptotically tight. 


\begin{prop} \label{prop:exp_time}
In a graph hedonic game with LAS preferences over a tree, the IS dynamics may require an exponential number of deviations before converging. 
\end{prop}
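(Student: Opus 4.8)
The plan is to exhibit, for every integer $k\ge 1$, a tree on $n=\Theta(k)$ vertices together with LAS values and an initial feasible partition from which the IS dynamics admits a sequence of $2^{\Omega(k)}=2^{\Omega(n)}$ deviations that simulates a binary counter. The tree is a spine $s_1-s_2-\cdots-s_k$ in which each $s_i$ additionally carries a constant-size gadget $H_i$ (a single pendant leaf suffices for most of them), so that, rooting the tree at the ``fast'' endpoint $r=s_k$, every spine vertex has exactly two children and hence $\sum_{j\in N}m_j^{r}=\Theta(2^{k})$; the upper bound $|T(r)|\le\sum_j m_j^r$ of Lemma~\ref{tree:moves:r} is thus itself exponential, and the construction will be tuned so that $r$ performs $\Theta(2^{k})$ deviations, simultaneously showing that this bound is asymptotically tight.

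Concretely, I would first fix the edges of the tree and assign all values $v_i(\cdot)$, keeping them non-negative and supported only on the edges and scaling them strictly between levels --- for instance $v_{s_i}(s_{i+1})=2^{i}$ and every value appearing inside $H_i$ bounded by $2^{i-1}$ --- so that a spine vertex never gives up a higher-level edge for a lower-level one; I would also make the relevant ``upward'' values equal to $0$, so that every host accepts an incoming neighbour (a host spine vertex is then indifferent, and a gadget vertex strictly welcomes it). Next I would define recursively, for $i$ from $k$ down to $1$, a sequence $\rho_i$ of IS deviations that drives the sub-configuration on $s_i,\dots,s_k$ from a canonical ``level-$i$ low'' state $Z_i$ to a canonical ``level-$i$ high'' state $O_i$, following the schema $\rho_i=\rho_{i+1}\cdot(\text{$s_i$ moves from low to high})\cdot\overline{\rho_{i+1}}$, where $\overline{\rho_{i+1}}$ is the reverse run $O_{i+1}\to Z_{i+1}$ and $\rho_k$ is a constant-length gadget move. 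The key claim, proved by induction on $i$, is that $\rho_i$ is a legal IS sequence of length $|\rho_i|=2|\rho_{i+1}|+O(1)=\Theta(2^{k-i})$ during which $s_k$ deviates $\Theta(2^{k-i})$ times. Taking $\pi_0=Z_1$, the full run $\rho_1$ then has length $\Theta(2^{k})$ and makes $r=s_k$ deviate $\Theta(2^{k})$ times, which proves the proposition and the tightness of Lemma~\ref{tree:moves:r}. Verifying that each single move in $\rho_1$ is a legitimate IS deviation is routine once the values are fixed: by Claim~\ref{tree:claim1}, in a tree the deviator's utility after her move equals the value of the unique edge she builds, so strict improvement is immediate from the value ordering, while acceptance by the welcoming coalition follows from the zero ``upward'' values.

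The main obstacle I anticipate is making the carry discipline be enforced \emph{exactly} by these local, non-negative, edge-supported preferences, i.e.\ arranging the gadget values so that, along $\rho_1$: (a) the move ``$s_i$ goes from low to high'' becomes available precisely when level $i+1$ has reached $O_{i+1}$, and never as a shortcut; (b) performing it together with the subsequent $\overline{\rho_{i+1}}$ leaves level $i+1$ reset to $Z_{i+1}$; and (c) no member of any welcoming coalition ever has a strict objection. I would handle this through the strict cross-level scaling together with a careful design of a single level's gadget (two or three auxiliary vertices are enough), which reduces the whole verification to a bounded case analysis within one level plus the inductive composition; the length bound $\Theta(2^{k})$ and the match with Lemma~\ref{tree:moves:r} then follow by unrolling the recursion.
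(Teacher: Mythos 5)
Your high-level plan --- a caterpillar (spine plus constant-size pendants) on which a doubling recursion forces one player to deviate $2^{\Omega(n)}$ times --- has exactly the shape of the paper's construction, but the mechanism you propose for the recursion cannot work with the values you write down, and the part you defer (``a careful design of a single level's gadget'') is precisely where the argument lives. Under LAS preferences a player's utility depends only on which of her \emph{neighbours} share her coalition, and in a tree a deviation by $s_i$ can affect a player $w$ only if $v_w(s_i)>0$ or $w$ was connected to some valued neighbour only through $s_i$; since every neighbour of $s_{i+1}$ is directly adjacent to $s_{i+1}$, the second case never applies to $s_{i+1},\dots,s_k$ or to their gadgets, and you have set $v_{s_{i+1}}(s_i)=0$. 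Hence the middle move of $\rho_i$ (``$s_i$ goes from low to high'') changes neither the utility nor the set of available IS deviations of any player at levels $i+1,\dots,k$, so nothing enables the claimed run $\overline{\rho_{i+1}}$. Note also that $\overline{\rho_{i+1}}$ cannot literally be the reverse of $\rho_{i+1}$: each IS deviation strictly improves its deviator, so the reversed move strictly worsens her. For the same reason your identification of $s_k$ as the fast endpoint is inconsistent with $v_{s_i}(s_{i+1})>0$ and $v_{s_{i+1}}(s_i)=0$: resets can only propagate \emph{down} the chain of positive values, so the player capable of exponentially many moves is $s_1$, and rooting Lemma~\ref{tree:moves:r} at $s_k$ gives an upper bound that your instance does not attain at that root.

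The paper's proof fixes all of this with a simpler instance and with the causation running the right way: a spine $x_1,\dots,x_{t+1}$ with a pendant $y_i$ on each $x_i$, $i\le t$, values $v_{x_i}(x_{i+1})=2$, $v_{x_i}(y_i)=1$ and $0$ otherwise (no geometric scaling, no multi-vertex gadget; acceptance is automatic because all values are non-negative). Player $x_i$'s two improving moves are $0\to 1$ (join $\{y_i\}$) and $1\to 2$ (join $x_{i+1}$'s coalition, to which she is connected only through $x_{i+1}$); every subsequent improving move of $x_{i+1}$ physically cuts $x_i$ off, resetting her utility to $0$, so the deviation counts satisfy $\delta_i = 2+2\delta_{i+1}$ with $\delta_t=2$, giving $2^{t+1}-2$ deviations in total and tightness of Lemma~\ref{tree:moves:r} at the root $x_1$. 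To repair your write-up you would need to (a) relabel so that the exponentially-moving player is the one farthest from the inert endpoint, (b) replace the ``forward run / single move / reverse run'' schema by the interleaved pattern in which each move of level $i+1$ is what triggers the next two moves of level $i$, and (c) actually verify legality of every step, which with the paper's values is a one-line check rather than a deferred case analysis.
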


\section{Conclusion and future work} 

This article draws a picture of convergence issues of the IS dynamics in graph hedonic games. Our results, summarized in Table \ref{table:HGT}, depend on the graph topology and a hierarchy of preferences. Many open questions are left for future works. 

In most of the cases, 
we could upper bound the worst-case number of steps that the IS dynamics may need before converging. 
However, this question remains open for the cases covered by Theorems \ref{thm:paths:monotone} and \ref{thm:three:paths}.

 The IS dynamics rely on \emph{better} responses but note 
that our counterexamples with cyclic sequences hold also with {\em best} responses. 
However, the results showing a lower bound to the number of deviations needed for convergence  (Example \ref{example:LB_star} and Proposition \ref{prop:exp_time}) do not extend to the case of best responses. We in fact conjecture that under best responses the time of convergence can significantly improve.

When a player $i$ leaves her coalition $\pi(i)$ during the IS  dynamics, 
we suppose that the members of $\pi(i) \setminus \{i\}$ reconfigure themselves into a minimum number of coalitions, but a different reconfiguration scheme can be assumed if, for example, $\pi(i) \setminus \{i\}$ is replaced by singletons. An interesting question is whether the reconfiguration 
has a significant impact on the convergence of the IS dynamics. 

This work deals with convergence for any instance and any initial state but if convergence is not always guaranteed, then it is 
worth determining the complexity of deciding it for a given instance, from  
a given initial state.

Finally, a convergence study combining graph topologies and properties of the preferences can be conducted on deviations other than the IS deviations.


\bibliography{IS_refs.bib}

\appendix

\clearpage
\section{Appendix}

\subsection*{Extension of Example \ref{counterexample:cycles}}

For $n \in \bbN$ we write $[n]=\{1,2,\ldots,n\}$. 
The vertex set $N$ is equal to $[n]$,  and $L = \{\{i,i+1\}  : i \in [n-1]\} \cup \{1,n\}$. For $i \in [n-1]$, player $i$ has value $1$ for player $i+1$, and zero otherwise. Player $n$ has value $1$ for player $1$, and zero otherwise. Suppose the initial configuration consists of the three nonempty coalitions: $\{1, \ldots, n-2\}$, $\{n-1\}$ and $\{n\}$. The pattern of the cyclic sequence is such that: (step 1) the (unique) player of the first coalition who is connected with the player of the second coalition (for which she has value 1) decides to join the second coalition,      
(step 2) the unique player of the second coalition who is connected with the player of the third coalition (and for which she has value 1) decides to join the third coalition, and (step 3) the unique player of the third coalition who is connected with the player of the first coalition (and for which she has value 1) decides to join the first coalition. The 3-step pattern can be repeated indefinitely, so the dynamics may never converge. 

\subsection*{Proof of Lemma \ref{lem:size}}
\begin{proof} On a path, the feasible coalitions of a partition $\pi$ are sub-paths. The hypothesis of   
individually rational preferences excludes the possibility that an player leaves her coalition to create a new coalition where she is alone.  Thus, the only possible deviations on a path are that a player on the extreme right (resp., left) of a coalition moves to the nonempty coalition on her right (resp., left). These deviations either maintain the number of coalitions, or decrease the number of coalitions by one unit.      
\end{proof}

\subsection*{Omitted material of Section~\ref{sec:path:monotone}}
We present the full proof of Theorem~\ref{thm:paths:monotone}.
We first need some additional notation.

For $s,t \in \bbN$ with $s \leq t$ we write $[s,t]=\{s,s+1,\ldots,t\}$; we write $[s]=\{1,2,\ldots,s\}$. 
The input graph is a path $P$ where the players are named  
$1, \ldots, n$ 
from left to right. A feasible coalition 
in $P$ is a sequence of consecutive players denoted by $[\ell,r]_P$, where $\ell$ and $r$ are the leftmost and 
and 
rightmost players, respectively. 


Given a cycle  $D=\langle \pi_0,\ldots,\pi_{\alpha} \rangle$ in the IS dynamics, let $|D|=\alpha+1$ denote its length, i.e., the number of deviations. 
Therefore, $D$ is composed of IS deviations $m_0,\ldots,m_{|D|-1}$, where, for every $i=0,\ldots,|D|-1$, deviation $m_i$ at time $t_i$ is from state $\pi_{i}$ to state $\pi_{(i+1) \mod |D|}$.

Given a cycle $D$ in the IS dynamics, and any integers $a, b \in \{0,\ldots,|D|-1\}$, we denote by $[a, b]_D$ the set of integers defined as follows: first of all, let $a'=a \mod |D|$ and $b'=b \mod |D|$; if $b' \geq a'$, then $[a, b]_D  = \{a',\ldots,b'\}$, otherwise, i.e., if $b'<a'$, $[a, b]_D  = \{a',\ldots,|D|-1\} \cup \{0,\ldots,b'\}$. Finally, we say that integer $a$ is \emph{closer (resp., farther) with respect to time $t$} than integer $b$ if $(a-t) \mod |D| < (b-t) \mod |D|$ (resp., if $(a-t) \mod |D| > (b-t) \mod |D|$).\footnote{We are assuming that $r = x \mod |D|$ is defined according to the floored division, i.e., when the quotient is defined as $q=\left\lfloor \frac {x}{|D|} \right\rfloor$, and thus the remainder $r= x - q |D|$ is always non-negative even if $x$ is negative.}

\def\s{{s}}
\def\t{{t}}
\begin{proof}[Proof of Theorem~\ref{thm:paths:monotone}]
Assume by contradiction that there exists a cycle $D=\langle \pi_0,\ldots,\pi_{|D|-1} \rangle$ in the IS dynamics. 
First of all, notice that, since the number of coalitions can never increase after a deviation of $D$ (by Lemma \ref{lem:size}), 
the number of coalitions of all states in $D$ has to be the same: let $k$ be this number. 
We therefore obtain that, for every $i=0,\ldots,|D|-1$, state $\pi_i$ is composed of $k$ coalitions $C_1^i,\ldots,C_k^i$. 
In the following, we always assume that superscripts 
of $C$ are modulo $|D|$.
Moreover, for any $j=1,\ldots,k$ and $i=0,\ldots,|D|-1$, let $\ell_j^i$ and $r_j^i$ denote the leftmost and rightmost player in 
$C_j^i$, 
respectively, i.e., 
$C_j^i = [\ell_j^i,r_j^i]_P$.

Let $a$ be the leftmost player in $P$ making a deviation in $D$, and let $\beta$ be the smallest index of the coalition $a$ belongs to in $D$: there exists a time $\s_\beta \in \{0,\ldots,|D|-1\}$ such that deviation $m_{\s_\beta}$ is performed by player $a_\beta(=a)$ moving from 
coalition 
$C_\beta^{\s_{\beta}}$ to coalition $C_{\beta+1}^{\s_{\beta}+1}$.
Moreover, let $\t_\beta$ be the farthest time with respect to $\s_\beta$ at which player $a_\beta$ comes back to the coalition of index $\beta$, i.e., she moves from coalition 
$C_{\beta+1}^{\t_{\beta}}$ to coalition $C_{\beta}^{\t_{\beta}+1}$. 

For any $j=\beta+1,\ldots,k-1$, let $a_{j}$ be the rightmost player in coalition $C_{j}^{\s_{j-1}+1}$ 
(i.e., $a_{j}$ is the rightmost player of the coalition in which player $a_{j-1}$ arrives at time $\s_{j-1}+1$) and let $\s_{j}$ be the closest time with respect to time $\s_{j-1}$ in which $a_j$ moves from coalition 
$C_j^{\s_j}$ to coalition $C_{j+1}^{\s_{j}+1}$ 
($\s_{j}=\infty$ if $a_j$ never moves from a coalition of index $j$ to a  coalition of index ${j+1}$). 
Moreover, let $\t_j$ be the farthest time with respect to $\s_j$ in which player $a_j$ comes back to a coalition of index $j$, i.e., she moves from coalition 
$C_{j+1}^{\t_{j}}$ to coalition $C_{j}^{\t_{j}+1}$ 
($\t_{j}=\infty$ if $\s_{j}=\infty$, otherwise notice that player $a_j$ has to come back in order to complete the cycle $D$ in the dynamics).  

We now prove the following claims by induction on $j=\beta,\ldots,k-1$:
\begin{itemize}
\item (i) if $j \leq k-2$, then it holds that:
\begin{itemize}
\item (i.a) player $a_{j+1}$ is in coalition 
$C_{x}^{\t_j}$ 
with $x \geq j+2$; 
\item (i.b) $\s_{j+1} \in [\s_{j},\t_{j}]_D$ and  $\t_{j+1} \in [\t_{j},\s_{j}]_D$; 
\item (i.c) player $a_{j}$ is both in coalition 
$C_{j+1}^{\s_{j+1}}$ and in coalition $C_{x}^{\t_{j+1}}$ 
with $x \leq j$; 
\end{itemize}
\item (ii) if $j=k-1$, then player $a_{k-1}$ moves to a coalition of index $k$ (by claim (i.a) holding for $j=k-2$), but she cannot go back to a coalition of index $k-1$: a contradiction to the fact that $D$ is a cycle in the IS dynamics.
\end{itemize}

\begin{figure}[ht]
\begin{center}
\begin{tikzpicture}
[xscale=1.4,yscale=1]
\definecolor[named]{drawColor}{gray}{0}
 \path[
      draw=drawColor,
      line width= 0.6pt,
      dash pattern=on 1pt off 3pt,
      line cap=round,
    ] (1.6,2.6)--(1.6,0);
 \path[
      draw=drawColor,
      line width= 0.6pt,
      dash pattern=on 1pt off 3pt,
      line cap=round,
    ] (3.1,2.6)--(3.1,0);
 \path[
      draw=drawColor,
      line width= 0.6pt,
      dash pattern=on 1pt off 3pt,
      line cap=round,
    ] (4.6,2.6)--(4.6,0);

\node () at (0.8,2.6) {$\cdots\; C_{\beta-1}$};
\node () at (2.35,2.6) {$C_{\beta}$};
\node () at (3.85,2.6) {$C_{\beta+1}$};
\node () at (5.2,2.6) {$C_{\beta+2}$};

\node () at (0,1.8) {$\s_{\beta}$};
\node () at (1.8,1.8) {$\ell_{\beta}^{\s_\beta}$};
\node () at (2.9,1.8) {$a_{\beta}$};
\node () at (4.3,1.8) {$a_{\beta+1}$};

\node () at (3.4,1.2) {$..\;a_{\beta}\;..$};
\node () at (4.3,1.2) {$a_{\beta+1}$};
\node () at (0,1.2) {$\s_{\beta+1}$};

\node () at (3.3,0.6) {$a_{\beta}$};
\node () at (1.8,0.6) {$\ell_{\beta}^{\s_\beta}$};
\node () at (5.1,0.6) {$..\;a_{\beta+1}\;..$};
\node () at (0,0.6) {$\t_{\beta}$};

\node () at (2.35,0.1) {$..\;a_{\beta}\;..$};
\node () at (4.95,0.1) {$a_{\beta+1}$};
\node () at (0,0.1) {$\t_{\beta+1}$};

\draw[->] (2.9,2.0)--(3.3,2.0);
\draw[->] (4.3,1.4)--(4.8,1.4);
\draw[<-] (2.9,0.8)--(3.3,0.8);
\draw[<-] (4.3,0.3)--(4.8,0.3);
\end{tikzpicture}

\end{center}
\caption{Claim (i) of the induction base. Coalitions of $P$ are displayed from left to right, separated with dotted vertical lines. The evolution of the coalitions over time is shown from top to bottom, with time steps given on the extreme left.} \label{fig:base.i}
\end{figure}
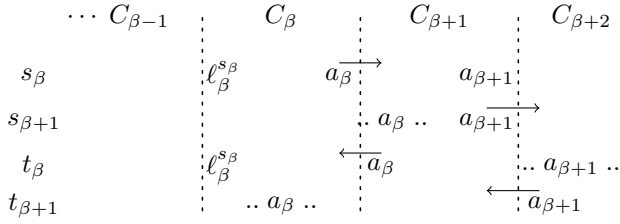

The base of the induction is verified for $j=\beta \leq k-2$ (see Figure \ref{fig:base.i}): player $a_{{\beta}+1}$ has to be in coalition $C_{x}^{\t_{\beta}}$ with $x\geq{\beta}+2$ because the absence of player $a_{{\beta}+1}$ in coalition $C_{{\beta}+1}^{\t_{\beta}}$  is necessary in order for player $a_\beta$ to go back to coalition $C_{{\beta}}^{\t_{\beta}+1}$ (claim i.a). In fact, by deviation $m_{\s_\beta}$, it holds that $[a_\beta, a_{\beta+1}]_P  \succ_{a_\beta} [\ell_{\beta}^{\s_\beta},a_\beta]_P$ and the first $\beta-1$ coalitions\footnote{I.e., the $\beta-1$ coalitions which are on the left of $C_\beta$.} are fixed throughout the entire cycle\footnote{This is due to the fact that $a_\beta$ is the leftmost player who moves.} $D$, implying that $\ell_{\beta}^{\t_{\beta}}=\ell_{\beta}^{\s_\beta}$: by monotonicity, player $a_\beta$ needs that $a_{{\beta}+1}$ is not in coalition $C_{{\beta}+1}^{\t_{\beta}}$ in order to have incentive to go back to coalition $C_{{\beta}}^{\t_{\beta}+1}$. It also follows that $\s_{\beta+1} \in [\s_{\beta},\t_{\beta}]_D$ and  $\t_{\beta+1} \in [\t_{\beta},\s_{\beta}]_D$ (claim i.b), because player $a_{\beta+1}$ is in coalitions $C_{\beta+1}^{\s_\beta} $ and $C_{x}^{\t_\beta}$ with $x \geq \beta+2$, and that player $a_{{\beta}}$ has to be (by the definitions of $\s_{\beta+1}$, $\t_{\beta}$ and $\t_{\beta+1}$) both in coalition $C_{{\beta}+1}^{\s_{\beta+1}}$  and in coalition $C_{{x}}^{\t_{\beta+1}}$ with $x\leq \beta$ (claim i.c). 

\begin{figure}[ht]
\begin{center}
\begin{tikzpicture}[scale=1.2]
\definecolor[named]{drawColor}{gray}{0}

\path[
      draw=drawColor,
      line width= 0.6pt,
      dash pattern=on 1pt off 3pt,
      line cap=round,
    ]  (1.6,3.1)--(1.6,2.1);
\path[
      draw=drawColor,
      line width= 0.6pt,
      dash pattern=on 1pt off 3pt,
      line cap=round,
    ]  (3.1,3.1)--(3.1,2.1);
\path[
      draw=drawColor,
      line width= 0.6pt,
      dash pattern=on 1pt off 3pt,
      line cap=round,
    ]  (4.6,3.1)--(4.6,2.1);

\node () at (1.2,3.1) {$C_1\; \cdots$};
\node () at (2.35,3.1) {$C_{\beta-1}$};
\node () at (3.85,3.1) {$C_{\beta}$};
\node () at (5.2,3.1) {$C_{k}$};

\node () at (0.6,2.5) {$\s_{\beta}$};

\node () at (3.3,2.5) {$\ell_{\beta}^{\s_\beta}$};
\node () at (4.4,2.5) {$a_{\beta}$};
\node () at (5.8,2.5) {$...\;n$};

\draw[->] (4.4,2.7)--(4.8,2.7);
\end{tikzpicture}

\end{center}
\caption{Claim (ii) of the induction base.} \label{fig:base.ii}
\end{figure}
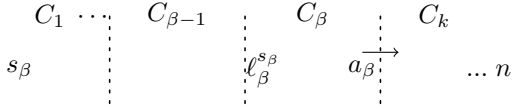

Moreover, if $\beta=k-1$ (see Figure \ref{fig:base.ii}), it holds that player $a_{{\beta}}$ cannot go back to a coalition of index $k-1$, because, by deviation $m_{\s_\beta}$, it holds that $[a_\beta, n]_P  \succ_{a_\beta} [\ell_{\beta}^{\s_\beta},a_\beta]_P$ and  the first $\beta-1$ coalitions are fixed throughout the entire cycle $D$ (claim ii).

\begin{figure}[ht]
\begin{center}
\begin{tikzpicture}
[xscale=1.4,yscale=1]
\definecolor[named]{drawColor}{gray}{0}

\path[
      draw=drawColor,
      line width= 0.6pt,
      dash pattern=on 1pt off 3pt,
      line cap=round,
    ] (1.6,2.6)--(1.6,0);
\path[
      draw=drawColor,
      line width= 0.6pt,
      dash pattern=on 1pt off 3pt,
      line cap=round,
    ]  (3.1,2.6)--(3.1,0);
\path[
      draw=drawColor,
      line width= 0.6pt,
      dash pattern=on 1pt off 3pt,
      line cap=round,
    ]  (4.6,2.6)--(4.6,0);

\node () at (0.8,2.6) {$C_{j-1}$};
\node () at (2.35,2.6) {$C_{j}$};
\node () at (3.85,2.6) {$C_{j+1}$};
\node () at (5.2,2.6) {$C_{j+2}$};

\node () at (0,1.8) {$\s_{j}$};
\node () at (0,1.2) {$\s_{j+1}$};
\node () at (0,0.6) {$\t_{j}$};
\node () at (0,0.1) {$\t_{j+1}$};

\node () at (2.15,1.8) {$\ell_{j}^{\s_j}\;..\;a_{j-1}$};
\node () at (2.95,1.8) {$a_{j}$};
\node () at (4.3,1.8) {$a_{j+1}$};

\node () at (3.4,1.2) {$..\;a_{j}\;..$};
\node () at (4.3,1.2) {$a_{j+1}$};

\node () at (3.3,0.6) {$a_{j}$};
\node () at (1.2,0.6) {$a_{j-1}$};
\node () at (5.1,0.6) {$..\;a_{j+1}\;..$};

\node () at (2.35,0.1) {$..\;a_{j}\;..$};
\node () at (4.9,0.1) {$a_{j+1}$};

\draw[->] (2.9,2.0)--(3.3,2.0);
\draw[->] (4.3,1.4)--(4.8,1.4);
\draw[<-] (2.9,0.8)--(3.3,0.8);
\draw[<-] (4.3,0.3)--(4.8,0.3);
\end{tikzpicture}

\end{center}
\caption{Claim (i) of the induction step.} \label{fig:step.i}
\end{figure}

As to the induction step, in order to prove the claims for $j=\beta+1,\ldots,k-1$, let us assume the claims hold for $j-1$. 
If $j\leq k-2$ (see Figure \ref{fig:step.i}), then player $a_{j+1}$ has to be in coalition $C_{x}^{\t_{j}}$ with $x \geq j+2$ because the absence of player $a_{j+1}$ in coalition $C_{j+1}^{\t_{j}} $ is necessary in order for player $a_j$ to go back to coalition $C_{j}^{\t_{j}}$ (claim i.a). In fact, by deviation $m_{\s_j}$, it holds that $[a_j, a_{j+1}]_P  \succ_{a_j} [\ell_{j}^{\s_j},a_j]_P$ with $\ell_{j}^{\s_j} \leq a_{j-1}$ as, by the induction hypothesis applied on $j-1$ (claim i.c), we know that player $a_{j-1}$ is in coalition $C_{j}^{\s_j}$: 
by monotonicity, player $a_j$ needs that $a_{{j}+1}$ is not in coalition $C_{{j}+1}^{\t_{j}}$ in order to have incentive to go back to coalition $C_{{j}}^{\t_{j}+1}$ as, again by the induction hypothesis applied on $j-1$ (claim i.c), we know that player $a_{j-1}$ is in coalition $C_{x}^{\t_j}$  with $x \leq j-1$.
It also follows that $\s_{j+1} \in [\s_{j},\t_{j}]_D$ and  $\t_{j+1} \in [\t_{j},\s_{j}]_D$ (claim i.b), because player $a_{j+1}$ is in coalitions $C_{j+1}^{\s_j} $ and $C_{x}^{\t_j}$ with $x\geq j+2$, and that player $a_{{j}}$ has to be (by the definitions of $\s_{j+1}$, $\t_{j}$ and $\t_{j+1}$) both in coalition $C_{{j}+1}^{\s_{j+1}}$  and in coalition $C_{{x}}^{\t_{j+1}}$ with $x\leq j$ (claim i.c).

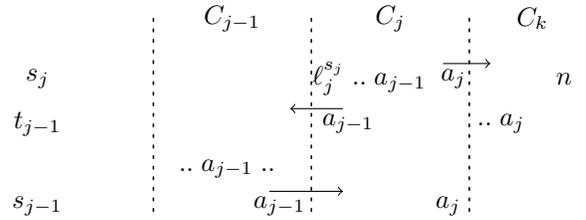
\begin{figure}[ht]
\begin{center}
\begin{tikzpicture}
[xscale=1.4,yscale=1]
\definecolor[named]{drawColor}{gray}{0}

\path[
      draw=drawColor,
      line width= 0.6pt,
      dash pattern=on 1pt off 3pt,
      line cap=round,
    ]  (1.6,2.6)--(1.6,0);
\path[
      draw=drawColor,
      line width= 0.6pt,
      dash pattern=on 1pt off 3pt,
      line cap=round,
    ]  (3.1,2.6)--(3.1,0);
\path[
      draw=drawColor,
      line width= 0.6pt,
      dash pattern=on 1pt off 3pt,
      line cap=round,
    ]  (4.6,2.6)--(4.6,0);

\node () at (2.35,2.6) {$C_{j-1}$};
\node () at (3.85,2.6) {$C_{j}$};
\node () at (5.2,2.6) {$C_{k}$};

\node () at (0.5,1.8) {$\s_{j}$};
\node () at (3.65,1.8) {$\ell_{j}^{\s_j}\;..\;a_{j-1}$};
\node () at (4.45,1.8) {$a_{j}$};
\node () at (5.5,1.8) {$n$};
\draw[->] (4.35,2.0)--(4.8,2.0);

\node () at (3.45,1.2) {$a_{j-1}$};
\node () at (4.9,1.2) {$..\;a_{j}$};
\node () at (0.5,1.2) {$\t_{j-1}$};
\draw[<-] (2.9,1.4)--(3.4,1.4);

\node () at (2.3,0.6) {$..\;a_{j-1}\;..$};

\node () at (2.8,0.1) {$a_{j-1}$};
\node () at (4.4,0.1) {$a_{j}$};
\node () at (0.5,0.1) {$\s_{j-1}$};
\draw[->] (2.7,0.3)--(3.4,0.3);

\end{tikzpicture}
\end{center}
\caption{Claim (ii) of the induction step.} \label{fig:step.ii}
\end{figure}
Moreover, if $j=k-1$ (see Figure \ref{fig:step.ii}), it holds that player $a_{{j}}$ cannot go back to a coalition of index $k-1$. In fact, by deviation $m_{\s_j}$, it holds that $[a_j, n]_P  \succ_{a_j} [\ell_{j}^{\s_j},a_j]_P$ with $\ell_{j}^{\s_j} \leq a_{j-1}$ as, by the induction hypothesis applied on $j-1$ (claim i.c), we know that player $a_{j-1}$ is in coalition $C_{j}^{\s_j}$: by monotonicity, player $a_j$ cannot go back to a coalition of index $j$ as, by the induction hypothesis applied on $j-1$ (claim i.b), we know that, whenever it exists, $\t_j \in [\t_{j-1},\s_{j-1}]_D$ and in all the timestamps in this interval, player $a_{j-1}$ is in coalition of index at most $j-1$ (claim ii).
\end{proof}

\subsection*{Omitted material of Section~\ref{sec:IR:path}}

We present the full proof of Theorem~\ref{thm:three:paths}. 
We first note that when the initial state consists of at most two coalitions, the IS dynamics always converge. 

\begin{lemma}\label{lem:2}
Suppose that $(N,L)$
is a path $P$ and players have individually rational preferences. 
If $|\pi_0| \leq 2$, then the IS dynamics converge.
\end{lemma}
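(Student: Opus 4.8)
The plan is to reduce the dynamics to a walk on a path-shaped state graph with an absorbing endpoint, and to show this walk is a trail, hence finite. First I would invoke Lemma~\ref{lem:size}: since $|\pi_0|\le 2$ and the number of coalitions never increases, every partition reached has at most two coalitions. If $|\pi_0|=1$ the argument is immediate: the partition is the grand coalition $[1,n]_P$, and under individually rational preferences no player strictly prefers her singleton $\{i\}$ to $[1,n]_P$, so no IS deviation exists and the dynamics halt at once. So I would concentrate on states with exactly two coalitions.

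A two-coalition feasible partition of a path is $\pi_b:=\{[1,b]_P,[b+1,n]_P\}$ for a unique \emph{boundary} $b\in\{1,\dots,n-1\}$, as the two coalitions are subpaths partitioning $\{1,\dots,n\}$. Next I would enumerate the IS deviations available from $\pi_b$. A deviating player must move to a coalition of the current partition or to $\emptyset$; moving to $\emptyset$ produces a singleton, which individual rationality forbids as a strict improvement; moving to the other coalition requires $T\cup\{i\}$ to be connected, which on the path forces $i=b$ (joining $[b+1,n]_P$) or $i=b+1$ (joining $[1,b]_P$). Hence from $\pi_b$ the only possible successors are $\pi_{b-1}$ (player $b$ moves right; if $b=1$ this is the grand coalition) and $\pi_{b+1}$ (player $b+1$ moves left; if $b=n-1$ this is the grand coalition); moreover no deviation leaves the grand coalition, so it is absorbing.

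The key step is a monotonicity observation for each ``edge'' $e_b$, $b\in\{1,\dots,n-2\}$, linking the states $\pi_b$ and $\pi_{b+1}$. A transition $\pi_b\to\pi_{b+1}$ is player $b+1$'s deviation and requires $[1,b+1]_P\succ_{b+1}[b+1,n]_P$; a transition $\pi_{b+1}\to\pi_b$ is also player $b+1$'s deviation and requires $[b+1,n]_P\succ_{b+1}[1,b+1]_P$. By asymmetry of $\succ_{b+1}$ at most one of these directions can occur along a given run. I would then strengthen this to: $e_b$ is crossed at most once. Tracking the integer boundary $b_t$, which changes by $\pm1$ at each step among two-coalition states, a second crossing of $e_b$ in its permitted direction, say $\pi_b\to\pi_{b+1}$, would force $b_t$ to travel from $b+1$ back to $b$ in between, and by a discrete intermediate-value argument some intervening step would have $b_t=b+1$ and $b_{t+1}=b$, a crossing of $e_b$ in the forbidden direction --- a contradiction.

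Putting it together: every step between two two-coalition partitions crosses one of the $n-2$ edges $e_1,\dots,e_{n-2}$, each at most once, so at most $n-2$ such steps occur; any remaining step moves to the grand coalition, which is terminal. Hence the run has length at most $n-1$, and the IS dynamics converge. The main obstacle I anticipate is the bookkeeping of the second paragraph --- verifying that the list of admissible deviations from $\pi_b$ is genuinely exhaustive and that individual rationality both excludes the move to $\emptyset$ and makes the grand coalition absorbing --- since once the state graph is pinned down, the asymmetry-plus-intermediate-value argument is routine.
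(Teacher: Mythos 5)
Your proof is correct and rests on the same core observation as the paper's: the two directions of motion across the boundary between $[1,b+1]_P$ and $[b+1,n]_P$ are both deviations of the same player $b+1$ and require opposite strict preferences between the same two coalitions, so (with singleton moves excluded by individual rationality and the grand coalition absorbing) the boundary can never reverse and one coalition shrinks monotonically. The paper packages this as a short induction --- after the first deviation only the new border player of the shrinking coalition can ever deviate --- whereas your edge-crossing/trail formulation on the state graph is an equivalent repackaging that in addition makes the $O(n)$ bound on the number of steps explicit.
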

\begin{proof}
Suppose that $|\pi_0| \leq 2$. Without loss of generality, suppose that player $x$ deviates from the right coalition $[x,n]_P$ to the left coalition $[1,x-1]_P$, resulting in a new partition $\pi_1$. This leaves the coalition $[x+1,n]_P$ on the right. 
In this new partition $\pi_1$, the only player who can potentially deviate  is the border player $x+1$ of the right coalition. Since the right coalition keeps getting smaller with each deviation, the IS dynamics must eventually terminate.
\end{proof}

By the above lemma, it suffices to consider the case when the graph is a path and $|\pi_0|= 3$. For each state $\pi_i$ with $|\pi_i|=3$, note that $\pi_i$ partitions the players into the left coalition including the left-most player $1$, the right coalition including the right-most player $n$, and the central coalition including none of these players. In the sequel, we write $C^i$ as the central coalition of $\pi_i$; we denote by $\ell^i$ the leftmost player of the central coalition, i.e., $\ell^i= \min_{j \in C^i} j$.

Before we proceed, we make the following observations. First, players who deviate to the left-most coalition strictly prefer that coalition to the previous coalitions when they become the left-end player; second, if players from the left and the right coalition consecutively deviate to the central coalition, then such dynamics should converge.

\begin{lemma}\label{lem:3}
Suppose that the graph $(N,L)$ is a path $P$. Let $T=[1,x-1]_P$ be the leftmost coalition and $S=[x,y_0]_P$ be the coalition right next to $T$ in some state of the IS dynamics. Suppose that $k$ players $y_1,y_2,\ldots,y_k$ from the coalition on the right of $S$ consecutively join $S$ and then $x$ deviates from $S \cup \{y_1,y_2,\ldots,y_k\}$ to the left coalition $T$. Then, we have $[1,x]_P \succ_{x} [x,y_j]_P$ for any $j=0,1,\ldots,k$. 
\end{lemma}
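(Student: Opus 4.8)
\textbf{Proof plan for Lemma~\ref{lem:3}.}

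The plan is to exploit two facts about the IS dynamics on a path together with individual rationality. The first is that each of the players $y_1,\dots,y_k$ only joins $S$ (or its current enlargement) because the members of $S$ that are present \emph{accept} them, and in particular player $x$, being the leftmost player of the coalition to which $y_j$ joins, must accept: she weakly prefers the enlarged coalition to the one she was in just before. The second is that, during this whole sub-sequence, player $x$ is the leftmost player of the coalition containing $x$: the coalition $T$ on her left is fixed (no player of $T$ is moving), so the only changes to $x$'s coalition are the arrivals of $y_1,\dots,y_k$ on the right end. Hence the coalition of $x$ grows monotonically from $[x,y_0]_P$ to $[x,y_1]_P$ to $\dots$ to $[x,y_k]_P$ in the inclusion order, and $x$ weakly prefers each successive coalition to the previous one because $x$ accepted each newcomer. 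By transitivity of $\succeq_x$ this gives
\[
[x,y_k]_P \succeq_x [x,y_{k-1}]_P \succeq_x \cdots \succeq_x [x,y_0]_P .
\]

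Now I would bring in the final deviation: $x$ deviates from $[x,y_k]_P$ to the left coalition $T=[1,x-1]_P$, forming $[1,x]_P$. For this to be an IS deviation, $x$ must strictly want it, i.e.
\[
[1,x]_P \succ_x [x,y_k]_P .
\]
Combining this strict preference with the chain of weak preferences above and transitivity of $\succeq_x$ yields $[1,x]_P \succ_x [x,y_j]_P$ for every $j=0,1,\dots,k$, which is exactly the claim. (When $j=k$ this is the deviation condition itself; for $j<k$ it follows from $[1,x]_P \succ_x [x,y_k]_P \succeq_x [x,y_j]_P$.)

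The only subtlety, and the step I would be most careful about, is justifying that $x$'s coalition really is exactly $[x,y_{j}]_P$ after the $j$-th arrival and that $x$ stays the left endpoint throughout — this needs the structure of the path (feasible coalitions are subpaths), the fact that $T$ is inert during this window, individual rationality (so nobody in $S$ splits off to be a singleton, which could disconnect things), and the observation that each $y_j$ is the border player of the coalition immediately to the right, so after joining it becomes the new right endpoint while $x$ remains the left endpoint. Once this bookkeeping is pinned down, the argument is just a transitivity chain. I do not expect any real obstacle beyond this careful tracking of which player sits at which end of the coalition at each step.
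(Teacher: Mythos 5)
Your proof is correct and matches the paper's argument exactly: the paper likewise chains $[1,x]_P \succ_x [x,y_k]_P \succeq_x [x,y_{k-1}]_P \succeq_x \cdots \succeq_x [x,y_0]_P$ using the acceptance condition for each arriving $y_j$ and the strict preference from $x$'s final deviation. The extra bookkeeping you flag about $x$ remaining the left endpoint is sound and is left implicit in the paper.
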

\begin{proof}
Player $x$ keeps accepting each player $y_j$ for $j=1,2,\ldots,k$; moreover, player $x$ strictly prefers $[1,x]_P$ to the coalition $[x,y_k]_P$. Hence, $[1,x]_P \succ_{x} [x,y_{k}]_P \succeq_{x} [x,y_{k-1}]_P \succeq_{x} \cdots \succeq_{x} [x,y_{0}]_P.$
\end{proof}

\begin{lemma}\label{lem:4}
Suppose that the graph $(N,L)$ is a path $P$, agents have individually rational preferences, and $|\pi_0| =3$. If there exists a pair of players $x$ and $y$ such that $x$ deviates from the left coalition (respectively, the right coalition) to the central coalition and subsequently player $y$ deviates from the right coalition (respectively, the left coalition) to the central coalition, then the IS dynamics converge. 
\end{lemma}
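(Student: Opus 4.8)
The plan is a proof by contradiction. Suppose the IS dynamics do not converge; since the set of feasible partitions is finite, the run then contains a cycle $D$, and by Lemma~\ref{lem:2} every state of $D$ has exactly three coalitions (if some state had at most two coalitions, the run continued from it would be finite). Thus each state in $D$ reads \emph{(left coalition $[1,\ell-1]_P$, central coalition $[\ell,r]_P$, right coalition $[r+1,n]_P$)}, and since preferences are individually rational and $(N,L)$ is a path, the only deviations keeping three coalitions are the four boundary moves: (1) player $\ell-1$ joins the central coalition (decreasing $\ell$); (2) player $\ell$ joins the left coalition (increasing $\ell$); (3) player $r$ joins the right coalition (decreasing $r$); (4) player $r+1$ joins the central coalition (increasing $r$); any other deviation would drop to two coalitions, hence cannot occur in $D$. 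Moves (1),(2) leave $r$ untouched and involve only the two leftmost coalitions, while moves (3),(4) leave $\ell$ untouched and involve only the two rightmost coalitions. Using the symmetry in the statement, assume $x$ makes a type-(1) move at time $t_1$ and $y$ makes a type-(4) move at some later time $t_2>t_1$.

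Next I would isolate the left boundary. Since $D$ contains the type-(1) move at $t_1$, $\ell$ is non-constant on $D$; set $m=\min_i\ell^i$. As $\ell$ never drops below $m$, player $m$ is the leftmost member of the central coalition whenever it belongs to it, so $m$ enters the central coalition only by a type-(1) move and leaves it only by a type-(2) move. Take one episode of $m$ in the central coalition: it enters forming $[m,\rho]_P$, giving $[m,\rho]_P\succ_m[1,m]_P$, and leaves forming $[1,m]_P$ out of $[m,\rho']_P$, giving $[1,m]_P\succ_m[m,\rho']_P$, so $\rho\ne\rho'$ and the episode has at least one move. Throughout the episode $\ell$ is frozen at $m$ and the left coalition is frozen at $[1,m-1]_P$, hence only moves of types (3),(4) occur; restricted to the player set $[m,n]_P$ this is exactly a two-coalition dynamics, so by the argument behind Lemma~\ref{lem:2}, as soon as a type-(4) move happens inside the episode every later move of the episode is type-(4). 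So during the episode $r$ first runs down (type-(3) moves) and then runs up (type-(4) moves): Lemma~\ref{lem:3} applied to this closing type-(4) run gives $[1,m]_P\succ_m[m,c]_P$ for every value $c$ taken by $r$ along it, and the acceptances of $m$ along the same run give $[m,\rho']_P\succeq_m[m,c]_P$ for those $c$. Plugging these into $[m,\rho]_P\succ_m[1,m]_P\succ_m[m,\rho']_P$ forces the episode to open with a type-(3) move, i.e.\ immediately after $m$ joins, the then-rightmost member of the central coalition departs.

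Finally I would run the mirror analysis on the right boundary, with $M=\max_i r^i$ and the player $M$: symmetrically, each episode of player $M$ in the central coalition opens with a type-(1) move. It remains to combine the two episode structures with the strict ordering $t_1<t_2$ from the hypothesis: tracking player $m$ across all of its episodes — its ranking of the central coalition can only weakly improve along type-(4) runs, while the central coalition it belongs to is strictly $\succ_m$-dominated by $[1,m]_P$ at every moment $m$ rejoins the left coalition and strictly $\succ_m$-dominates $[1,m]_P$ at every moment $m$ rejoins the central coalition — one extracts a quantity (the right endpoint at which $m$ re-enters the central coalition, equivalently the size of the central coalition upon each return of $m$) that must strictly decrease around the cycle, which is impossible. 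This contradiction gives the lemma. The main obstacle is exactly this last bookkeeping step: turning the per-episode inequalities produced by Lemma~\ref{lem:3} on the left and by its mirror on the right into a genuinely monotone, non-recurring quantity. The hypothesis that a left-to-central move strictly precedes a right-to-central move is what breaks the left/right symmetry and decides from which boundary ($\ell$ or $r$) that monotone quantity should be built, so that the chain of preference inequalities telescopes; without this ordering the left and right constraints do not link up.
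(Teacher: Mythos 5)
Your plan does not establish the lemma: by your own admission the decisive step --- turning the per-episode inequalities into ``a quantity that must strictly decrease around the cycle'' --- is left as an open obstacle, and that is exactly where all the difficulty of a cycle-based argument lives. In effect you are trying to re-derive the full convergence statement of Theorem~\ref{thm:three:paths} (whose proof in the paper is a delicate induction that \emph{uses} Lemma~\ref{lem:4} as an ingredient), so the route is both misdirected and, as written, unfinished. There is also an internal problem in the episode analysis. You assert that once a type-(4) move occurs during an episode of $m$, ``every later move of the episode is type-(4)'' because the restriction to $[m,n]_P$ is a two-coalition dynamics; but $m$ retains the deviation option of returning to $[1,m-1]_P$, which lies outside that restriction, and if earlier type-(3) moves have shrunk the central coalition below $[1,m]_P$ in $m$'s ranking, a subsequent type-(4) move only guarantees that $m$'s valuation weakly increases from that (possibly low) level --- it does not restore $C\succ_m[1,m]_P$. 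So the locking claim is unjustified; worse, if it did hold, the episodes you go on to analyze (which close with $m$ leaving after a type-(4) run) could not close at all, so your own later steps contradict it.

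The lemma has a much more direct proof that uses the hypothesis head-on, with no cycle assumption. After $x$ joins the central coalition from the left and then $y$ joins it from the right, the state is $\{[1,x-1]_P,\,[x,y]_P,\,[y+1,n]_P\}$. Player $x$ wanted to join ($[x,y-1]_P\succ_x[1,x]_P$) and had to accept $y$ ($[x,y]_P\succeq_x[x,y-1]_P$), so by transitivity she strictly prefers staying to returning left; symmetrically $y$ will not return right; individual rationality rules out going alone; and non-border players cannot move by connectivity. Hence the only deviations available from this point on are the border players of the two outer coalitions joining the central one. Each such deviation strictly shrinks an outer coalition and preserves the invariant (every member of the central coalition must accept each newcomer, so her valuation of the central coalition only weakly increases and stays strictly above her exit option, while the newcomer's exit option is dominated at the moment she joins). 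Since the outer coalitions can shrink at most $n$ times in total, the dynamics terminate. This absorbing-configuration argument is the paper's proof, and it is the piece your proposal never reaches.
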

\begin{proof}
Without loss of generality, suppose that player $x$ deviates from the left coalition $[1,x]_P$ to the central coalition $[x+1,y-1]_P$, resulting in a new partition $\pi_{t+1}$; then player $y$ deviates from the right coalition $[y,n]_P$ to the central coalition $[x,y-1]_P$, resulting in a new partition $\pi_{t+2}$. 
In this new partition $\pi_{t+2}$, the only players who can potentially deviate are the border players $x-1$ and $y+1$ on the left and the right coalition. Since the right coalition and the left coalition keep getting smaller with each deviation, the IS dynamics must eventually terminate.
\end{proof}

We prove that the last player who leaves the central coalition comes back to it only if the central coalition gets smaller.

\begin{proof}[Proof of Theorem~\ref{thm:three:paths}]
Assume by contradiction that there exists a cycle $D=\langle \pi_0,\ldots,\pi_{|D|-1} \rangle$ in the IS dynamics. Notice that by Lemmas \ref{lem:size} and \ref{lem:2}, any $\pi_t$ partitions the players into exactly $3$ coalitions. Now, consider the left-most player $\ell^*$ who moves between the left and central coalitions, i.e., $\ell^*=\min_{t'=0,\ldots,|D|-1} \ell^{t'}$. 

We will prove the following claim: Suppose that $\ell^*$ arrives at the left coalition at time $s+1$ and $t$ is the closest time with respect to $s$ at which $\ell^*$ comes back to the central coalition. 
Then, the central coalition at time $t$ is strictly smaller than the central coalition at time $s$ (i.e., $C^t \subsetneq C^s$).

Without loss of generality, assume that $s<t$. Hence, we have $\ell^*=\ell^s=\ell^t$ and $\ell^*<\ell^{t'}$ for every $t' \in [s+1,t-1]$. 
Let $M(s,t)$ be the set of players who move across the central and the left coalitions between step $s$ and $t$, i.e., 
\[
M(s,t)=\{\, j \in N \mid \exists t',t'' \in [s,t]: j = \ell^{t'}  \land j<\ell^{t''}\,\}.
\]
We will prove by induction on $|M(s,t)|$ that the left most player $\ell^*$ always comes back to a smaller central coalition than the one he leaves. 

Consider first the base case when $|M(s,t)|=1$, namely, $\ell^*$ is the only player who move across the central and the left coalitions between step $s$ and $t$. By Lemma \ref{lem:4}, no player moves from the right coalition to the central coalition between steps $s$ and $t$. Thus, $C^t \subseteq C^s$. 
Moreover, since $\ell^*$ strictly prefers the coalition $C^t$ to the coalition $[1,\ell^*]_P$ and the coalition $[1,\ell^*]_P$ to the coalition $C^s$, we have $C^t \succ_{\ell^*} C^s$ and hence $C^t \neq C^s$, which implies $C^t \subsetneq C^s$. 

Now, suppose that our claim holds for $|M(s,t)|=1,2,\ldots,h-1$ and we prove it for the case when $|M(s,t)|=h$. We let $x$ be the right neighbor of $\ell^*$, i.e., $x=\ell^*+1$. 
Since $|M(s,t)| \geq 2$ and $\ell^*$ is the leftmost player who moves from the central coalition to the left coalition, player $x$ moves from the central coalition to the left coalition between steps $s$ and $t$; suppose that $x$ stays in the central coalition between step $s$ and $s_1$ and $x$ arrives at the left coalition at time $s_1+1$, i.e., $\ell^{t'}=x$ for every $t' \in [s+1,s_1-1]$ and $x < \ell^{s_1}$. Further, in order for player $\ell^*$ to come back to the central coalition, player $x$ needs to go back to the central coalition between steps $s_1+1$ and $t$; let $t_1$ be the closest time with respect to $s_1+1$ at which $x$ arrives at the central coalition, i.e., $x<\ell^{t'}$ for every $t' \in [s_1+1,t_1-1]$ and $x = \ell^{t_1}$. We define steps $s_2,t_2,\ldots, s_m,t_m$ in an analogous way, and let $t_m$ be the furthest time at which player $x$ arrives at the central coalition with respect to time $s$, i.e., $x=\ell^{t_m}$ and $x \in C^{t'}$ for every $t' \in [t_m,t]$. 

There are at most $h-1$ players who move between the central and the left coalition between step $s_1$ to $t_1$; thus applying the induction hypothesis to player $x$ implies that
\[
C^{t_1} \subsetneq C^{s_1}. 
\]
In fact, we have
\begin{align}\label{eq0}
C^{t_1} \subsetneq C^s \setminus \{\ell^*\}.
\end{align}
To see this, consider two cases. First, if the size of the central coalition decreases from $s$ to $s_1$, i.e., $C^{s_1} \subseteq C^s \setminus \{\ell^*\}$, we clearly have $C^{t_1} \subsetneq C^{s_1} \subseteq C^s \setminus \{\ell^*\}$. Second, consider the case when the size of the central coalition does not decrease between $s$ and $s_1$, i.e., $C^s \setminus \{\ell^*\} \subseteq C^{s_1}$. This means that some players from the right coalition consecutively join the central coalition and then $x$ deviates from the central coalition to the left coalition; by Lemma \ref{lem:3}, player $x$ strictly prefers $[1,x]_P$ to any central coalition $C^{t'}$ with $t' \in [s,s_1]$. Combining this with the fact that player $x$ strictly prefers $C^{t_1}$ to $[1,x]_P$, we get $C^{t_1} \neq C^{t'}$ for any $t' \in [s,s_1]$. Since $C^{t_1} \subsetneq C^{s_1}$ and $\ell^s= \ell^* < x = \ell^{t_1}$, we obtain $C^{t_1} \subsetneq C^s \setminus \{\ell^*\}$.

Similarly, there are at most $h-1$ players who move between the central and the left coalition from the step $s_j$ to $t_j$; thus applying the induction hypothesis to player $x$ implies that for each $j=2,\ldots,m$, we have
\begin{align}\label{eq1}
C^{t_j} \subsetneq C^{s_j}. 
\end{align}
In addition, for each $j=1,2,\ldots,m-1$, player $x$ comes back to the central coalition at step $t_j$ and leaves it at step $s_{j+1}$; by Lemma \ref{lem:4}, no player can move from the right to the central coalition between these steps; thus,
\begin{align}\label{eq2}
C^{s_{j+1}} \subsetneq C^{t_j}. 
\end{align}
Combining \eqref{eq0}, \eqref{eq1} and \eqref{eq2}, we get 
\[
C^{t_m}\subsetneq C^s \setminus \{\ell^*\}.
\]
Again, no player can move from the right to the central coalition between steps $t_m$ and $t$ due to Lemma \ref{lem:4}; hence, we have $C^t \setminus \{\ell^*\}\subseteq C^{t_m}$, implying that
\[
C^t\subsetneq  C^s.
\]

Finally, we have shown that every time $\ell^*$ comes back to the central coalition and the size of the coalition gets smaller than the one $\ell^*$ leaves; further, no player moves from the right to the central coalition by Lemma \ref{lem:4} until $\ell^*$ leaves the central coalition again. Hence, the size of the central coalition he leaves strictly decreases and thus the dynamics cannot repeat the same configuration, a contradiction.   
\end{proof}


The assumption that the preferences are  individually rational 
is very important for Theorem~\ref{thm:three:paths}. 
Without having IR 
preferences, the following example shows that 
the IS dynamics on paths may not converge even when the initial state 
consists of two coalitions.

\begin{example} \label{ex:2coalitions}
Let us create additional players $\alpha,\alpha'$ placed to the right of the path in Example~\ref{counterexample:paths}. Suppose that player $\alpha$ strictly prefers $\{\alpha,\alpha'\}$ to $\{a,b,c,d,e,f,g,h,\alpha\}$, and player $\alpha'$ is indifferent among all coalitions. Moreover, we modify the preferences of the original players so that the coalitions $\{a,b,c,d,e,f,g,h\}$ and $\{a,b,c,d,e\}$ are not individually rational for player $f$ and $a$, respectively.

Starting with an individually rational partition $\pi=\{\{a,b,c,d,e,f,g,h,\alpha\}, \{\alpha'\}\}$, player $\alpha$ deviates to $\{\alpha'\}$, which gives a partition $\pi$ that consists of $\{a,b,c,d,e,f,g,h\}$ and $\{\alpha,\alpha'\}$. Then $f$ may deviate to a singleton coalition, followed by the deviation of $a$ into a singleton coalition; the resulting partition consists of $\{\alpha,\alpha'\}$ and the partition $\{\{a\},\{b,c,d,e\},\{f\},\{g,h\}\}$ of the remaining players, which is one of the states of Example~\ref{counterexample:paths}. This means that we enter the cyclic sequence.
\end{example}

\subsection*{Omitted material of Section  \ref{sec:stars}}

\subsubsection*{Proof of Lemma  \ref{lemma:stars}}

\begin{proof}
Let $c$ denote the central player of the star. We refer to the coalition to which $c$ belongs as the \emph{central} coalition. 
By the assumption that in the IS dynamics players cannot go alone, we know that $T \neq \emptyset$ (cf. Line~\ref{line:ISdeviation} of Algorithm~\ref{alg:is:general}).
Thus, there are only two possible types of deviations: denoting by $j$ the player performing the deviation, either (i) $j=c$ and $T$ consists of a leaf of the graph $(N,L)$, or (ii) $j$ is a leaf of the graph and $T$ is the central coalition. 
Notice that both types of deviations (i) and (ii) concern the central player who is never worst off. There are at most $n-1$ coalitions $\{i\}$, with  $i \in N \setminus \{c\}$, to which the central player $c$ can deviate. Moreover, there are at most $n-1$ players in $N \setminus \{c\}$ who can join the central coalition, and no such player can leave the central coalition (by the assumption that in the IS dynamics players cannot go alone).
Hence, there are at most $n-1$ deviations of type (i), and at most $n-1$ deviations of type (ii) between two (consecutive) deviations of type (i) or after the last deviation of type (i). Hence, the number deviations in the considered IS dynamics is $\mathcal{O}(n^2)$.
\end{proof}

\subsubsection*{Proof of Theorem  \ref{IS:stars}}

\begin{proof}
The result follows from Lemma \ref{lemma:stars}.  When the preferences are IR, 
no agent $i$ prefers the singleton coalition $\{i\}$ to any other coalition $C \supset \{i\}$, and therefore it cannot happen that an agent chooses to go alone.
\end{proof}

\subsubsection*{Proof of Theorem  \ref{IS:stars:IRstate}}

\begin{proof}
First of all, notice that, by definition, in an IR state no agent can deviate to an empty coalition, i.e., no agent can choose to go alone. 
Therefore, by Lemma \ref{lemma:stars}, if it holds that all the states of the IS dynamics are IR, then the IS dynamics converge in a number of steps polynomial in $n$.
Thus, in order to prove the claim, in the following we show that a deviation starting from a IR state leads to another IR state. 

Let $c$ denote the central player of the star. We refer to the coalition to which $c$ belongs as the \emph{central} coalition.
Indeed, suppose that $\pi$ is IR and there is a feasible deviation of player $j \in N$ to $T \in \pi \cup \{\emptyset\}$. Then, recalling that $T \neq \emptyset$ because $\pi$ is IR, either (i) $j=c$ and $T$ consists of a leaf of the graph $(N,L)$, or (ii) $j$ is a leaf of the graph and $T$ is the central coalition. In either case, it can be easily verified that the new partition $\pi'$ that results from the deviation of $i$ to $T$ (cf. Line~\ref{line:ISdeviation} of Algorithm~\ref{alg:is:general}) is IR.
\end{proof}

\subsubsection*{Example showing tightness of Theorems \ref{IS:stars} and \ref{IS:stars:IRstate}}

\begin{example}\label{example:LB_star}
Consider a star with $2t$ leaf players belonging to sets $X=\{x_1,\ldots,x_t\}$ and $Y=\{y_1,\ldots,y_t\}$, in which the preferences of the central player $c$ are as follows: given two candidate coalitions $C$ and $C'$ for $c$, i.e., $C, C' \in \calF(c)$ with $C \neq C'$, $c$ prefers coalition $C$ to coalition $C'$, i.e., $C \succ_c C'$,  if (at least) one of the following conditions is met:
\begin{itemize}
    \item $C' = \{c\}$;
    \item $|C \cap X|=1$ and $|C' \cap X| \neq 1$;
    \item $C \cap X=\{x_i\}$ , $C' \cap X=\{x_{i'}\}$ and $i>i'$;
    \item $C \cap X=C' \cap X=\{x_i\}$ and $|C \cap Y| > |C' \cap Y|$.  
\end{itemize}
If, given two candidate coalitions $C$ and $C'$ for $c$, the above conditions neither imply $C \succ_c C'$ nor $C' \succ_c C$, we assume that $C \sim_c C'$, i.e., $c$ is indifferent between $C$ and $C'$. Moreover, we assume that any leaf player $l \in X \cup Y$ strictly prefers any coalition in $\calF(l) \setminus \{\{l\}\}$ to the singleton coalition $\{l\}$ and is indifferent among all coalitions in $\calF(l) \setminus \{\{l\}\}$. Notice that the above define preferences are individually rational.
For the sake of simplicity, in the following we identify a partition by its central coalition, i.e., by the coalition of its central player $c$ (notice that all other coalitions have to be singletons). The initial state of the IS dynamics is $\{c\}$. Consider the following sub-sequence of $t+1$ partitions inside the IS dynamics, depending on parameter $i$, for $i=1,\ldots,t$: $\{c,x_i\}, \{c,x_i,y_1\}, \{c,x_i,y_1,y_2\},\ldots, \{c,x_i,y_1,y_2,\ldots,y_t\} $ and notice that it can be obtained starting form $\{c,x_i\}$ and letting players $y_1,\ldots,y_t$ deviate by joining the central coalition.
The whole IS dynamics is obtained by chaining the initial state $\{c\}$ with the sub-sequences with parameters $1,\ldots,t$. 
In fact, the first state $\{c,x_1\}$ of the sub-sequence with parameter $1$  can be obtained by letting player $c$ deviate from the initial state $\{c\}$ to coalition $\{x_1\}$, while, for $i=2,\ldots,t$, the first state $\{c,x_i\}$ of the sub-sequence with parameter $i$  can be obtained by letting player $c$ deviate from the final state $\{c,x_{i-1},y_1,y_2,\ldots,y_t\}$ of the sub-sequence with parameter $i-1$ to coalition $\{x_i\}$. 
Overall, the described IS dynamics is composed by $t(t+1)$ deviations. Since $n=2t+1$, we have  convergence after a number of steps being $\Omega(n^2)$.
\end{example}

\subsubsection*{Non convergence in an almost star}

\begin{example} \label{ex:almoststar}Consider the instance described in Figure \ref{fig:almost_star} where the preferences are: 
\begin{itemize}
\item $a: \{a,b\} \succ \{a,e\} \succ \{a,b,d\} \succ  \{a,d\}  \succ \{a\}$
\item $b: \{a,b,d\} \succ \{b,c\} \succ \{a,b\} \succ  \{b\}$
\item $c: \{b,c\} \succ \{c\}$
\item $d: \{a,b,d\} \succ \{a,d\} \succ \{d\}$
\item $e: \{a,e\} \succ \{e\}$

\end{itemize}
The initial state is $\{\{a,b\},\{c\},\{d\},\{e\} \}$. Player $b$ moves from $\{a,b\}$ to $\{c\}$ giving $\{\{a\},\{b,c\},\{d\},\{e\}\}$. Player $a$ moves from $\{a\}$ to $\{d\}$ giving $\{\{a,d\},\{b,c\},\{e\}\}$. Player $b$ moves from $\{b,c\}$ to $\{a,d\}$ giving $\{\{a,b,d\},\{c\},\{e\}\}$. Player $a$ moves from $\{a,b,d\}$ to $\{e\}$ giving $\{\{a,e\},\{b\},\{c\},\{d\}\}$. Player $a$ moves from $\{a,e\}$ to $\{b\}$ giving $\{\{a,b\},\{c\},\{d\},\{e\}\}$, which is the initial state.
\end{example}

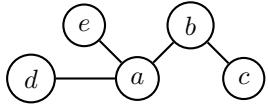
\begin{figure}
\begin{center}
\begin{tikzpicture}[node distance={10mm}, thick, main/.style = {draw, circle}]
\node[main] (1) {$c$}; 
\node[main] (2) [above left of=1]  {$b$}; 
\node[main] (3) [below left of=2] {$a$}; 
\node[main] (4) [above left of=3] {$e$}; 
\node[main] (5) [below left of=4] {$d$};

\draw[-] (1) -- (2);
\draw[-] (2) -- (3);
\draw[-] (4) -- (3);
\draw[-] (3) -- (5);

\end{tikzpicture} 
\end{center}
\caption{An ``almost'' star where the IS dynamics can cycle.} \label{fig:almost_star}
\end{figure}

\subsection*{Omitted material of Section  \ref{sec:trees}}


\subsubsection*{Extension of Example  \ref{counterexample:mon:tree}}
\phantom{.}

\noindent From Example \ref{counterexample:mon:tree}, one can duplicate the $a_i$ players and reproduce a cycle in the IS dynamics where all the values are either 1 or 0.

Concretely, insert a vertex $b_i$ between $a_i$ and $x_i$, for any $i=0,1,2$. Then, both $a_i$ and $b_i$ have utility 1 for $x_i$, $a_{i+1}$, and $b_{i+1}$, and $0$ otherwise, for any $i=0,1,2$ (subscripts are modulo $3$).  
Then, the IS dynamics may cycle as follows (deviations are made, in sequence, by $a_0,b_0,b_1,a_1,a_2,b_2,b_0,a_0,a_1,b_1,b_2,a_2$. 
\begin{itemize}
\item $\pi_0= \{\{x_0,b_0,a_0\} ,\{T,a_1,b_1\} ,\{x_1\},\{a_2,b_2,x_2\}\}$. 

\item $\pi_1= \{\{x_0,b_0\} ,\{T,a_0,a_1,b_1\} ,\{x_1\},\{a_2,b_2,x_2\}\}$. 
\item $\pi_2= \{\{x_0\} ,\{T,a_0,b_0,a_1,b_1\} ,\{x_1\},\{a_2,b_2,x_2\}\}$. 
\item $\pi_3= \{\{x_0\} ,\{T,a_0,b_0,a_1\} ,\{x_1,b_1\},\{a_2,b_2,x_2\}\}$. 
\item $\pi_4= \{\{x_0\} ,\{T,a_0,b_0\} ,\{a_1,x_1,b_1\},\{a_2,b_2,x_2\}\}$. 

\item $\pi_5= \{\{x_0\} ,\{T,a_0,b_0,a_2\} ,\{a_1,x_1,b_1\},\{b_2,x_2\}\}$. 
\item $\pi_6= \{\{x_0\} ,\{T,a_0,b_0,a_2,b_2\} ,\{a_1,x_1,b_1\},\{x_2\}\}$. 
\item $\pi_7= \{\{x_0,b_0\} ,\{T,a_0,a_2,b_2\} ,\{a_1,x_1,b_1\},\{x_2\}\}$. 
\item $\pi_8= \{\{x_0,b_0,a_0\} ,\{T,a_2,b_2\} ,\{a_1,x_1,b_1\},\{x_2\}\}$. 

\item $\pi_9= \{\{x_0,b_0,a_0\} ,\{T,a_1,a_2,b_2\} ,\{x_1,b_1\},\{x_2\}\}$. 
\item $\pi_{10}= \{\{x_0,b_0,a_0\} ,\{T,a_1,b_1,a_2,b_2\} ,\{x_1\},\{x_2\}\}$. 
\item $\pi_{11}= \{\{x_0,b_0,a_0\} ,\{T,a_1,b_1,a_2\} ,\{x_1\},\{b_2,x_2\}\}$. 

\end{itemize}


\bigskip
\subsubsection*{Proof of Theorem \ref{thm:trees}}\phantom{h}

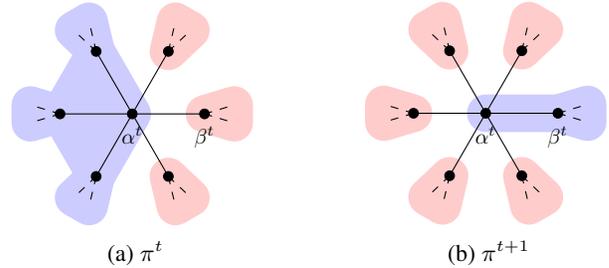
\begin{figure}[h]
    \centering
\begin{subfigure}{0.2\textwidth}
    \centering
    \begin{tikzpicture}[scale=0.8,transform shape]


	\def\ra{1.2} 
	\def\rax{1.7} 
	\def\diff{5}
	\coordinate (X) at (0:0); 
	\coordinate (P1) at (0:\ra);
	\coordinate (P2) at (60:\ra);
	\coordinate (P3) at (120:\ra);
	\coordinate (P4) at (180:\ra);
	\coordinate (P5) at (240:\ra);
	\coordinate (P6) at (300:\ra);
	
	\coordinate (PX1) at (0-\diff:\rax);
	\coordinate (PX2) at (60-\diff:\rax);
	\coordinate (PX3) at (120-\diff:\rax);
	\coordinate (PX4) at (180-\diff:\rax);
	\coordinate (PX5) at (240-\diff:\rax);
	\coordinate (PX6) at (300-\diff:\rax);

	\coordinate (PY1) at (0+\diff:\rax);
	\coordinate (PY2) at (60+\diff:\rax);
	\coordinate (PY3) at (120+\diff:\rax);
	\coordinate (PY4) at (180+\diff:\rax);
	\coordinate (PY5) at (240+\diff:\rax);
	\coordinate (PY6) at (300+\diff:\rax);
	
	\def\dim{0.08} 
	
	\tikzstyle{coalition-blue}=[draw=blue!20,fill=blue!20, line width=0.5cm,line cap=round,line join=round];
	
	\tikzstyle{coalition-red}=[draw=red!20,fill=red!20, line width=0.5cm,line cap=round,line join=round];
	
	\path[coalition-blue]
(X)--(P3)--(P4)--(P5)--cycle;

	\path[coalition-blue]
(P3)--(PY3)--(PX3)--cycle;

	\path[coalition-blue]
(P4)--(PY4)--(PX4)--cycle;

	\path[coalition-blue]
(P5)--(PY5)--(PX5)--cycle;

	\path[coalition-red]
(P6)--(PY6)--(PX6)--cycle;

	\path[coalition-red]
(P1)--(PY1)--(PX1)--cycle;

	\path[coalition-red]
(P2)--(PY2)--(PX2)--cycle;
	
	\draw[fill=black, radius=\dim] 
	(X) circle[radius=\dim]
	node [anchor=north, outer ysep=0.1cm] {$\alpha^t$} 
	(P1) circle[radius=\dim]
	node [anchor=north, outer ysep=0.1cm] {$\beta^t$}
	(P2) circle[radius=\dim]  
	(P3) circle[radius=\dim] 
	(P4) circle[radius=\dim] 
	(P5) circle[radius=\dim] 
	(P6) circle[radius=\dim];
	
	\draw (X) -- (P1);
	\draw (X) -- (P2);
	\draw (X) -- (P3);
	\draw (X) -- (P4);
	\draw (X) -- (P5);
	\draw (X) -- (P6);

	\draw[dashed] (P1) -- (PX1);
	\draw[dashed] (P2) -- (PX2);
	\draw[dashed] (P3) -- (PX3);
	\draw[dashed] (P4) -- (PX4);
 	\draw[dashed] (P5) -- (PX5);
	\draw[dashed] (P6) -- (PX6);
 	\draw[dashed] (P1) -- (PY1);
	\draw[dashed] (P2) -- (PY2);
 	\draw[dashed] (P3) -- (PY3);
	\draw[dashed] (P4) -- (PY4);
	\draw[dashed] (P5) -- (PY5);
	\draw[dashed] (P6) -- (PY6);

\end{tikzpicture}
    \caption{$\pi^t$}
    \label{fig:tree:t-step:A}
\end{subfigure}
\quad
\quad
\quad
\begin{subfigure}{0.2\textwidth}
    \centering
    \begin{tikzpicture}[scale=0.8,transform shape]


	\def\ra{1.2} 
	\def\rax{1.7} 
	\def\diff{5}
	\coordinate (X) at (0:0); 
	\coordinate (P1) at (0:\ra);
	\coordinate (P2) at (60:\ra);
	\coordinate (P3) at (120:\ra);
	\coordinate (P4) at (180:\ra);
	\coordinate (P5) at (240:\ra);
	\coordinate (P6) at (300:\ra);
	
	\coordinate (PX1) at (0-\diff:\rax);
	\coordinate (PX2) at (60-\diff:\rax);
	\coordinate (PX3) at (120-\diff:\rax);
	\coordinate (PX4) at (180-\diff:\rax);
	\coordinate (PX5) at (240-\diff:\rax);
	\coordinate (PX6) at (300-\diff:\rax);

	\coordinate (PY1) at (0+\diff:\rax);
	\coordinate (PY2) at (60+\diff:\rax);
	\coordinate (PY3) at (120+\diff:\rax);
	\coordinate (PY4) at (180+\diff:\rax);
	\coordinate (PY5) at (240+\diff:\rax);
	\coordinate (PY6) at (300+\diff:\rax);
	
	\def\dim{0.08} 
	
	\tikzstyle{coalition-blue}=[draw=blue!20,fill=blue!20, line width=0.5cm,line cap=round,line join=round];
	
	\tikzstyle{coalition-red}=[draw=red!20,fill=red!20, line width=0.5cm,line cap=round,line join=round];
	
	\path[coalition-blue]
(X)--(P1);

	\path[coalition-blue]
(P1)--(PY1)--(PX1)--cycle;

	\path[coalition-red]
(P2)--(PY2)--(PX2)--cycle;

	\path[coalition-red]
(P3)--(PY3)--(PX3)--cycle;

	\path[coalition-red]
(P4)--(PY4)--(PX4)--cycle;

	\path[coalition-red]
(P5)--(PY5)--(PX5)--cycle;

	\path[coalition-red]
(P6)--(PY6)--(PX6)--cycle;

	\draw[fill=black, radius=\dim] 
	(X) circle[radius=\dim]
	node [anchor=north, outer ysep=0.1cm] {$\alpha^t$} 
	(P1) circle[radius=\dim]
	node [anchor=north, outer ysep=0.1cm] {$\beta^t$}
	(P2) circle[radius=\dim]  
	(P3) circle[radius=\dim] 
	(P4) circle[radius=\dim] 
	(P5) circle[radius=\dim] 
	(P6) circle[radius=\dim];
	
	\draw (X) -- (P1);
	\draw (X) -- (P2);
	\draw (X) -- (P3);
	\draw (X) -- (P4);
	\draw (X) -- (P5);
	\draw (X) -- (P6);

	\draw[dashed] (P1) -- (PX1);
	\draw[dashed] (P2) -- (PX2);
	\draw[dashed] (P3) -- (PX3);
	\draw[dashed] (P4) -- (PX4);
 	\draw[dashed] (P5) -- (PX5);
	\draw[dashed] (P6) -- (PX6);
 	\draw[dashed] (P1) -- (PY1);
	\draw[dashed] (P2) -- (PY2);
 	\draw[dashed] (P3) -- (PY3);
	\draw[dashed] (P4) -- (PY4);
	\draw[dashed] (P5) -- (PY5);
	\draw[dashed] (P6) -- (PY6);

\end{tikzpicture}
    \caption{$\pi^{t+1}$}
    \label{fig:tree:t-step:B}
\end{subfigure}\caption{$t$-th steps of Algorithm \ref{alg:is:tree}.}
\label{fig:tree:t-step}
\end{figure}

  \begin{proof}[Proof of Claim \ref{tree:claim:one-edge}]
  For $t=0$ the claim is trivial.  
  Hence,  let us assume $t\geq 1$.  
  	If $i$ has not performed any deviation in the previous time steps $1, 2, \ldots, t-1$ then no edge in $L_{\pi^t}(i)$ has label equal to $i$. 
  	Otherwise, let $t' < t$ be the time step such that  $\alpha^{t'} = i$ and $\alpha^{t''} \neq i$, for every $t''$ such that  $t' < t'' < t$, i.e., $t'$ is the most recent time step in which $i$ has performed a deviation.  
   Then, by Claim \ref{tree:claim1}, $L_{\pi^{t'+1}}(i) = \{(i, \beta^{t'})\}$ and $\ell^{t'+1}((i, \beta^{t'})) = i$.
  	Since $i$ is not performing any deviation at any subsequent time step $t''$, $L_{\pi^{t''+1}}(i)$ will not contain any other edge with label $i$.  
  \end{proof}

\begin{claim}\label{tree:claim2}
	For every $i\in N$ and  $t\geq 0$ we have the  following:
	\begin{itemize}
		\item If $\alpha^t = i$  then $u_i(\pi^{t}) < u_i(\pi^{t+1})$ and $u_i(\pi^{t+1}) = v_i(\beta^t)$;
        \item If $\alpha^t \in N(i)$, $e=(\alpha^t,i) \in {\overline L}_{\pi^{t}}(i)$ and $i  = \beta^t$ 
  then $u_i(\pi^{t}) \leq  u_i(\pi^{t+1})$; 
		\item If $\alpha^t \in N(i)$ and $e=(\alpha^t,i) \in L_{\pi^{t}}(i)$ 
  then $u_i(\pi^{t}) \geq  u_i(\pi^{t+1})$; 
  \item If $\alpha^t \in N(i)$, $e=(\alpha^t,i) \in L_{\pi^{t}}(i)$ and  $\ell^t(e) = i$ 
  then $u_i(\pi^{t}) >  u_i(\pi^{t+1})$;
  \item otherwise $u_i(\pi^{t}) = u_i(\pi^{t+1})$. 
	\end{itemize}
\end{claim}

\begin{proof}[Proof of Lemma \ref{tree:lemma:join}]
The proof of the lemma trivially follows from the following claim.
\begin{claim}\label{tree:claim:join}
For every $i \in N$ and  $j\in N(i)$, 
\begin{align}
    |T_j(i)|
    &\leq 1 + \sum_{q \in N(i)} |T^{i}(q)|. \nonumber
\end{align}
\end{claim}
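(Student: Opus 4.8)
The plan is to prove Claim~\ref{tree:claim:join} (from which Lemma~\ref{tree:lemma:join} follows by summing over $j\in Q$, since $|T_Q(i)|=\sum_{j\in Q}|T_j(i)|$) by showing that between any two consecutive time steps of $T_j(i)$ there is at least one time step lying in $\bigcup_{q\in N(i)}T^i(q)$, and that the witnesses produced from distinct consecutive pairs are distinct. Since, for fixed $i$, the sets $T^i(q)$ with $q\in N(i)$ are pairwise disjoint (a time step is a single deviation of a single player), writing $T_j(i)=\{t_1<\cdots<t_m\}$ would then give $\sum_{q\in N(i)}|T^i(q)|\ge m-1$, i.e.\ $|T_j(i)|\le 1+\sum_{q\in N(i)}|T^i(q)|$; the case $m\le 1$ is immediate.

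First I would fix consecutive $t_k<t_{k+1}$ in $T_j(i)$. Immediately after $t_k$, the edge $(i,j)$ is built, it carries label $i$, and $u_i(\pi^{t_k+1})=v_i(j)$ (the consequence of Claim~\ref{tree:claim1} and LAS preferences already recorded). At $t_{k+1}$ player $i$ performs a strictly improving deviation with outcome $u_i(\pi^{t_{k+1}+1})=v_i(j)$, hence $u_i(\pi^{t_{k+1}})<v_i(j)$; since all values are non-negative, $(i,j)$ must be broken in $\pi^{t_{k+1}}$. I would then enumerate the deviations of $i$ inside $[t_k,t_{k+1})$ as $t_k=s_0<s_1<\cdots<s_\ell<t_{k+1}$ and track the unique edge incident to $i$ that is built and carries label $i$ right after each $s_r$; call it $(i,w_r)$, so $w_0=j$ and $u_i(\pi^{s_r+1})=v_i(w_r)$ (uniqueness is Claim~\ref{tree:claim:one-edge}). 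For each $r$ I would argue the following dichotomy on the half-open window up to the next deviation of $i$ (or up to $t_{k+1}$ when $r=\ell$): either $(i,w_r)$ is broken inside this window --- which can only be done by a deviation of $w_r$, at a moment when the label of $(i,w_r)$ is still $i$ (only a deviation of $i$ or of $w_r$ could relabel it, $i$ does not deviate here, and $w_r$ could not have deviated earlier in the window without already breaking it) --- so this time step lies in $T^i(w_r)$; or $(i,w_r)$ survives to $s_{r+1}$, in which case $u_i(\pi^{s_{r+1}})\ge v_i(w_r)$ and the strictly improving deviation of $i$ at $s_{r+1}$ forces $v_i(w_{r+1})>v_i(w_r)$. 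Iterating from $r=0$: if no $T^i$ event is ever produced, then $v_i(j)=v_i(w_0)\le v_i(w_\ell)$ and $(i,w_\ell)$ is still built in $\pi^{t_{k+1}}$, hence $u_i(\pi^{t_{k+1}})\ge v_i(w_\ell)\ge v_i(j)$, contradicting $u_i(\pi^{t_{k+1}})<v_i(j)$. So a time step of $\bigcup_q T^i(q)$ lies strictly inside $(t_k,t_{k+1})$; as these open intervals are pairwise disjoint, the resulting $m-1$ witnesses are distinct.

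The hard part, and the reason for the inductive bookkeeping, is that a drop of $u_i$ below $v_i(j)$ need not be caused by a break of $(i,j)$ itself: once $i$ deviates away and rebuilds an edge with other neighbours, the edge carrying both the label $i$ and (by non-negativity) a lower bound $v_i(w_r)$ on $i$'s utility is no longer $(i,j)$, and intermediate decreases of $u_i$ may stem from breaks of edges that are not labeled $i$ and hence are not counted by $\sum_q|T^i(q)|$. Following the moving ``label-$i$ edge'' incident to $i$ and showing that its value only increases between successive deviations of $i$ is what makes the counting go through; a secondary point needing care is checking that this edge still carries label $i$ at the instant a neighbour breaks it, which relies on Claim~\ref{tree:claim:one-edge} together with the precise update rule in line~\ref{alg:tree:label-update1} of Algorithm~\ref{alg:is:tree}.
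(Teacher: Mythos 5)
Your proposal is correct and follows essentially the same route as the paper's proof: between two consecutive times in $T_j(i)$ you locate a neighbour's break of the (unique, by Claim~\ref{tree:claim:one-edge}) label-$i$ edge incident to $i$, using that this edge's value can only increase across $i$'s own strictly improving deviations, so that without such a break $u_i$ could never fall below $v_i(j)$. The only difference is presentational — you run an explicit induction over the sub-windows between $i$'s successive deviations, where the paper argues by contradiction up to the first time the utility drops below $v_i(j)$ — but the witnesses, their disjointness, and the resulting count $|T_j(i)|\leq 1+\sum_{q\in N(i)}|T^i(q)|$ are identical.
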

\begin{proof}    
We consider the times steps depicted in figure  \ref{fig:tree:sequence}. 
Let $t, t'$ be two consecutive times steps in $T_j(i)$, i.e., $t<t'$ and ${t''}  \notin T_j(i)$ for every $t < t'' < t'$. 
Notice that  $t' \neq t+1$. 
Since  $u_i(\pi^{t+1})  = u_i(\pi^{t'+1}) = v_i(j)$,  we  can deduce that there  must exist an intermediate time step in which the utility of $i$ has strictly decreased, and in  particular it must strictly decrease below $v_i(j)$. 
Let ${\bar t} + 1$ denote the first time step after $t$ such that $u_i(\pi^{{\bar t}+1})  <  v_i(j)$, i.e., 
$t +  1 < {\bar t}  + 1 \leq t'$ and, for every $t''$ such that $t + 1 \leq t'' \leq {\bar t}$, we have $u_i(\pi^{{t''}})  \geq  v_i(j)$.
By the  third and forth points of Claim \ref{tree:claim2} it must hold that at time $\bar t$ there must be some neighbour of $i$ that breaks  the  edge with $i$, i.e.,
$\alpha^{\bar t} \in N(i)$ and $(\alpha^{\bar t},i) \in L_{\pi^{\bar t}}(i)$.
We  want to show that there  must exists a  time  step $\tau$ such that $t+1 \leq \tau \leq \bar{t}$ and $\tau  \in T^{i}(\alpha^{\tau})$.   
Assume by contradiction that there is no such step $\tau$.
Then, by Claim \ref{tree:claim:one-edge}, in every time step $t''$ such that $t+1 \leq t'' \leq \bar{t} + 1$,  
there must always exist an edge in $L_{\pi^{t''}}(i)$ with label $i$. 
Let us denote by $e_{t''}$ the unique edge in $L_{\pi^{t''}}(i)$ with label $i$. 
Notice that $e_{t''}$ is either the edge $(i,j)$, built by $i$ at time $t$, or a new edge built by $i$ in a subsequent time step. 
Since, by the first point of Claim \ref{tree:claim2}, at every deviation of $i$ the utility of $i$ is strictly increasing, we must have that, if $e_{t''} = (i, j'')$, $v_i(j'') \geq v_i(j)$;
this implies $u_i(\pi^{t''}) \geq v_i(j'') \geq v_i(j)$ and hence $u_i(\pi^{\bar t  + 1}) \geq v_i(j)$. 
The existence of $\tau$ implies that $i$ can build the edge with $j\in N(i)$ a number of times that is at most one plus the number of times that any player in $N(i)$ can break the edge with $i$ when its label is equal to $i$.
\end{proof}

\end{proof}
\begin{figure}[h]
    \centering
        \begin{tikzpicture}[scale=0.95]

\def\start{0}
\def\fine{9}

\def\plus{\text{+}}

\def\t{1}
\def\tp{2}
\def\interA{3}
\def\tbar{4}
\def\tbarp{5}
\def\interB{6}
\def\tprime{7}
\def\tprimep{8}

	\coordinate (start) at (\start,0);
	\coordinate (t) at (\t,0);
	\coordinate (tp) at (\tp,0);
	\coordinate (interA) at (\interA,0);
	\coordinate (tbar) at (\tbar,0);
	\coordinate (tbarp) at (\tbarp,0);
	\coordinate (interB) at (\interB,0);
	\coordinate (tprime) at (\tprime,0);
	\coordinate (tprimep) at (\tprimep,0);

\draw[thick, ->] (\start, 0) -- (\fine, 0);

 Draw the ticks and labels
\foreach \x in {\t,\tp,\tbar,\tbarp,\tprime,\tprimep} {
    \draw[thick] (\x, 0.15) -- (\x, -0.15);
}

    \node at (\t, -0.4) {$t$};
    \node at (\tp, -0.4) {$t\plus 1$};
    \node at (\tp, +0.4) {$=v_i(j)$};
    \node at (\interA, -0.4) {$\leq$};
    \node at (\tbar, -0.4) {$\bar{t}$};
	\node at (\tbarp, -0.4) {$\bar{t}\plus 1$};
	\node at (\tbarp, +0.4) {$< v_i(j)$};
	\node at (\interB, -0.4) {$\leq$};
	\node at (\tprime, -0.4) {${t'}$};
	\node at (\tprimep, -0.4) {${t'}\plus 1$};
	\node at (\tprimep, +0.4) {$=v_i(j)$};
		
	\draw[dashed,<->] (\tp, -0.8) -- (\tbar, -0.8);
	\node at (\interA, -1) {$\tau$};
\end{tikzpicture}
    \caption{Change of the utility of player $i$.}
    \label{fig:tree:sequence}
\end{figure}
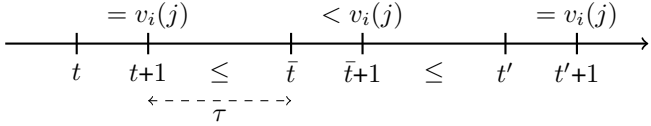

\begin{claim}\label{tree:claim:break}
For every $i\in N$, with $|N(i)|\geq 2$, and $j,j'\in N(i)$, with $j\neq j'$,  
	\begin{align}
	|T^{j'}_j(i)|
    &\leq 1 + \sum_{q \in N(i)\setminus \{j'\}} |T^{i}(q)|\nonumber.
\end{align}
\end{claim}
\begin{proof}
The proof of the claim is the proof of Claim \ref{tree:claim:join} with the additional observation that, due to the deviation of $i$ at time $t \in T^{j'}_j(i)$, we have $v_i(j') < v_i(j)$. 
This implies that before time $\tau$, $i$ never build an edge with $j'$ and hence $j'$ cannot  be the player breaking the edge labelled $i$ with $i$ at time $\tau$. 
\end{proof}

Claim \ref{tree:claim:break} implies the following key lemma. 
\begin{lemma}\label{tree:lemma:break}
For every $i\in N$, with $|N(i)|\geq 2$, $j'\in N(i)$ and $Q
 \subseteq N(i)\setminus \{j'\}$,
	\begin{align}
	|T^{j'}_{Q}(i)|
    &\leq |Q|\left(1 + \sum_{q \in N(i)\setminus \{j'\}} |T^{i}(q)|\right).\nonumber
\end{align}
\end{lemma}

\begin{proof}[Proof of Claim \ref{tree:moves:i}]
	Let $A_k = \{i\in N\setminus \{r\} : d_i \leq k\}$ for some  integer $k \in [0, d_r-1]$.
	We prove by induction on $k$ that the inequality  holds for every $i\in A_k$ and every $k$. 
	If $k=0$  then every $i\in A_k$ is a leaf which implies $C_i = \emptyset$ and $D_i=\{i\}$. Hence  both sides of the inequality are equal  to $0$. 
	Assume that the  claim is true for $A_k$, we prove the  claim for $A_{k+1}$.
	Let $i$ be any player in $A_{k+1}$.
 Since $C_i = N(i)\setminus \{p(i)\}$,
	from Lemma \ref{tree:lemma:break} we have 
	\begin{align}\label{tree:moves:step1}
    |T^{p(i)}_{C_i}(i)|
    &\leq |C_i|\big(1 + \sum_{q \in C_i
    } |T^{i}(q)|\big). 
\end{align}
Moreover, by definition, for every $q \in C_i$,
\begin{align}\label{tree:moves:step2}
	|T^{i}(q)| = |T^i_{C_q}(q)|.
\end{align}
Finally, by inductive hypothesis, we have that for every $q\in C_i$ \begin{align}\label{tree:moves:step3}
	|T^i_{C_q}(q)| \leq \sum_{j\in D_q} m^q_j. 
\end{align}

Hence,  
	\begin{align}
    |T^{p(i)}_{C_i}(i)|
    &\leq |C_i|\big(1 + \sum_{q \in C_i
    } \sum_{j\in D_q} m^q_j\big)\nonumber\\
    &= |C_i| + |C_i|\sum_{q \in C_i
    } \sum_{j\in D_q} m^q_j\nonumber\\
    &= m^i_i + \sum_{ q  \in C_i
    }  \sum_{j\in D_q} \left( |C_i| \cdot   m^q_j\right)\nonumber\\
    &= m^i_i + \sum_{ q  \in C_i
    } \sum_{j\in D_q} m^i_j\nonumber\\
    &=  
    \sum_{j\in D_i} m^i_j,\nonumber
\end{align}
where the first inequality follows by combining \eqref{tree:moves:step1},  \eqref{tree:moves:step2} and \eqref{tree:moves:step3}.   
\end{proof}

\bigskip
\subsubsection*{Proof of Corollary \ref{cor:tree}}
 \begin{proof}
We first focus on paths.
Notice that in the tree rooted in $r$ we have $|C_r| \leq 2$ and $|C_j| = 1$ for every $j\neq r$ which is not a leaf.
Hence, $m^r_j \leq 2$ for every $j\in D(r)$, which implies, by  Lemma \ref{tree:moves:r},  $|T(r)|\leq 2n$. 

We now prove the claim for stars and always apply Lemma \ref{tree:moves:r}.
If $r$ is the center of the star, we have $|T(r)| \leq m^r_r = |C_r| = n-1$. 
On the other hand, if $r$ is a leaf of the start, let $x\neq r$ be the center of the star, we have $|T(r)| \leq m^r_r + m^r_x = |C_r| + |C_x|\cdot|C_r| = 1 + |C_x| = 1 + (n-2) = n-1$.
\end{proof}

\bigskip
\subsubsection*{Proof of Proposition \ref{prop:exp_time}}
\begin{proof}
Take some positive integer $t$ and consider the instance depicted on Figure \ref{fig:tree_exp_moves}. There are $2t+1$ players named $\{x_1, \ldots, x_{x_{t+1}}\} \cup \{y_1, \ldots,y_t\}$. For $i=1$ to $t$, player $x_i$ has value $2$ for player $x_{i+1}$, and value $1$ for player $y_i$. In any other case, the value is 0. Since the instance is additively separable with non negative values, every request for joining a coalition is accepted. The proof relies on the following claim that is used inductively: 

{\em There exists a sequence of states where the number of deviations made by $x_i$ is at least two times the number of deviations made by  $x_{i+1},$ for every $i\in [t-1]$.}

In order to prove the claim, take some $i\in [t-1]$ and consider the following pattern inside the sequence: $(a)$ $x_i$ is neither  with $y_i$ nor  $x_{i+1}$ ($x_i$'s utility is 0), $(b)$ $x_i$ joins $\{y_i\}$ ($x_i$'s utility becomes 1), $(c)$ $x_i$ joins the coalition of $x_{i+1}$ ($x_i$'s utility becomes 2), and $(d)$ $x_{i+1}$ moves so that $x_i$ becomes alone ($x_i$'s utility becomes 0). Between two consecutive steps of the pattern of $i$, some other independent deviations can occur such as ``$x_i$ accepts that $x_{i-1}$ joins her coalition''. Such a step is also part of a pattern, but for an index different from $i$.  We consider a unique sequence where all the patterns, for  $i=1$ to $t-1$, appear multiple times,  and are entangled. Some steps are share by two patterns, but for different values of $i$. For example, step $(d)$ for index $i$ can also be step $(b)$ or step $(c)$ for index $i+1$. 
In the initial state of the sequence,  every player is alone (step $(a)$ of every pattern).

A pattern of $i$ includes two deviations by $x_i$ (steps $(b)$ and $(c)$) which can be repeated as many times as $x_{i+1}$ ``abandons'' $x_i$ (step   $(d)$).   
In order to see this, let us describe a part of the unique sequence from the viewpoint of $x_i$ where $i \in [t-1]$.  
In the first state, player $x_i$ is neither with $y_1$ nor $x_{i+1}$ (step $(a)$ of pattern $i$). 
Afterwards, player $x_i$ joins $y_i$ (step $(b)$ of pattern $i$).  
Then, player $x_i$ joins $x_{i+1}$ (step $(c)$ of pattern $i$). 
Then, player $x_{i+1}$ joins $y_{i+1}$ (step $(d)$ of pattern $i$ and step $(b)$ of pattern $i+1$ as well).  
Then, player $x_i$ joins $y_i$ (second occurrence of step $(b)$ of pattern $i$).  
Then, player $x_i$ joins $\{x_{i+1},y_{i+1}\}$ (second occurrence of step $(c)$ of pattern $i$).  
Then, player $x_{i+1}$ joins $\{x_{i+2}\}$ (second occurrence of step $(d)$ of pattern $i$ and step $(c)$ of pattern $i+1$).  
Then, player $x_i$ joins $y_i$ (third occurrence of step $(b)$ of pattern $i$), and player $x_i$ joins $\{x_{i+1},x_{i+2}\}$ (third occurrence of step $(c)$ of pattern $i$). So, we have described six deviations by $x_i$ (namely, three steps $(b)$ and three steps $(c)$), and four of them follow the two defections by $x_{i+1}$ (two steps $(d)$). Therefore, if $x_{i+1}$ can ``abandon'' $x_{i}$ $\delta$ times, then $x_{i}$ can do $2+2\delta$ moves.

The instance is such that $x_i$ is affected by the deviations of $x_{i+1}$ but 
$x_{i+1}$ is not affected by the deviations of $x_i$:  
$x_{i+1}$ simply accepts each time $x_i$ decides to join her coalition, and the utility of $x_{i+1}$ remains unchanged in that case. That is why  the patterns for $i=1$ to $t-1$ can occur in the same sequence.

The number of moves in the sequence can be lower bounded by induction. The induction basis depends on the number of moves of $x_{t}$. Then, from $i=t-1$ down to $1$, we can estimate the number of moves made by $x_i$.

Starting from an initial state where all the players are alone, player $x_t$ can make two deviations: joining $\{y_t\}$, and then joining $\{x_{t+1}\}$. Based on this, the claim tells us that $x_{t-1}$ makes at least $2^2$ deviations.  More generally, player $x_i$ makes at least $2^\Delta$ deviations where $i=t+1-\Delta$ and $i\in [t]$. 
In total, there are at least $\sum_{i=1}^t 2^\Delta=\sum_{i=1}^t 2^{t+1-i}=\sum_{i=1}^t 2^{i}=2^{t+1}-2$ deviations in the proposed sequence. Since the number of players $n$ is equal to $2t+1$, we get that the proposed sequence requires at least $2^{(n+1)/2}-2$ deviations, which is exponential in $n$.

Lastly, notice that the number of deviations of node $x_1$ in the instance considered in Proposition  \ref{prop:exp_time} matches the bound of Lemma \ref{tree:moves:r}; in fact
$m^{x_1}_{y_k} = 0$, for $k \in \{1,2,\ldots,t\}$, $m^{x_1}_{x_{t+1}} = 0$  and $m^{x_1}_{x_t} = 2$, $m^{x_1}_{x_{t-1}} = 2\cdot 2$ and,  in general,  $m^{x_1}_{x_{t-k}} = 2^{k+1}$, hence $\sum_{k=0}^{t-1} m^{x_1}_{x_{t-k}} = \sum_{k=0}^{t-1} 2^{k+1} = 2^{t+1}-2$. 
\end{proof}

\subsubsection*{Quadratic convergence on paths}

\begin{example} \label{n_square_path}
$(N,L)$ is a path with $n$ players having LAS preferences. The players are denoted by $1$ to $n$ from left to right. Every player $i \in \{2,\ldots,n\}$ has value 1 for player $i-1$, and every player $i \in \{1,\ldots,n-1\}$ has value 2 for player $i+1$. Algorithm \ref{alg:is:lowerb} describes a possible execution of the IS dynamics for the instance. At each deviation, player $i$ joins the coalition of player $i+1$ for which she has value 2 (the previous utility was 0, if player $i$ were alone, or 1 if player $i$ were in the same coalition as $i-1$). By doing so, player $i \in \{2,n-1\}$ breaks the coalition that player  $i-1$ has created with her.  For each $r \in [n-1]$ we have $n-r$ deviations, i.e.,  $n(n-1)/2$ deviations in total.       
 
\begin{algorithm}                      
\caption{IS dynamics on a path of length $n$ (LAS preferences)}         
\label{alg:is:lowerb}                          
\begin{algorithmic}[1]                  
\STATE Every player is alone in the initial partition 
\FOR{$r=1$ {\bf to} $n-1$}
\FOR{$i=1$ {\bf to} $n-r$}
\STATE player $i$ deviates to the coalition of player $i+1$ 
\ENDFOR

\ENDFOR
 \end{algorithmic}
\end{algorithm}

\end{example}

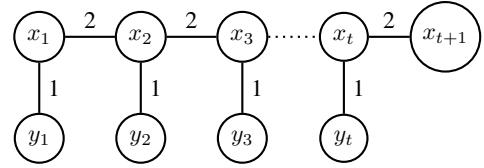
\begin{figure}

\begin{center}
\begin{tikzpicture}[node distance={15mm}, thick, main/.style = {draw, circle},
scale=0.9, transform shape] 
\node[main] (1) {$x_1$}; 
\node[main] (2) [right of=1]  {$x_2$}; 
\node[main] (3) [right of=2] {$x_3$}; 
\node[main] (4) [right of=3] {$x_{t}$}; 
\node[main] (5) [right of=4] {$x_{t+1}$}; 
\node[main] (6) [below of=1] {$y_1$}; 
\node[main] (7) [below of=2] {$y_2$}; 
\node[main] (8) [below of=3] {$y_3$}; 
\node[main] (9) [below of=4] {$y_{t}$};

\draw[dotted] (3) -- (4);

\draw[-] (1) -- node[midway, above, sloped, pos=0.5] {2} (2); 
\draw[-] (2) -- node[midway, above, sloped, pos=0.5] {2} (3); 
\draw[-] (4) -- node[midway, above, sloped, pos=0.5] {2} (5); 
\draw[-] (1) -- node[midway, right, pos=0.5] {1} (6); 
\draw[-] (2) -- node[midway, right, pos=0.5] {1} (7); 
\draw[-] (3) -- node[midway, right, pos=0.5] {1} (8); 
\draw[-] (4) -- node[midway, right, pos=0.5] {1} (9); 

\end{tikzpicture} 
\end{center}
\caption{Instance where the IS dynamics can converge after an exponential number of steps.}
\label{fig:tree_exp_moves}
\end{figure}

\end{document}